\numberwithin{equation}{section}
\newtheorem{theorem}{Theorem}[section]
\newtheorem{lemma}[theorem]{Lemma}
\newtheorem{corollary}[theorem]{Corollary}
\newtheorem{observation}[theorem]{Observation}
\newtheorem{expl}[theorem]{Example}
\newenvironment{definition}[1][Definition.]{\begin{trivlist}
\item[\hskip \labelsep {\bfseries #1}]}{\end{trivlist}}
\begin{document}
\bibliographystyle{plain}
\title{Mutual Dimension and Random Sequences\footnote{This research was supported in part by National Science Foundation Grants 0652519, 1143830, 124705, and 1545028. Part of the second author's work was done during a sabbatical at Caltech and the Isaac Newton Institute for Mathematical Sciences at the University of Cambridge. A preliminary version of part of this work was presented at the Fortieth International Symposium on Mathematical Foundations of Computer Science, August 24-28, 2015, in Milano, Italy.}}
\author{Adam Case and Jack H. Lutz\\
Department of Computer Science\\
Iowa State University\\
Ames, IA 50011 USA}
\date{}
\maketitle
\begin{abstract}
If $S$ and $T$ are infinite sequences over a finite alphabet, then the {\it lower} and {\it upper mutual dimensions} $mdim(S:T)$ and $Mdim(S:T)$ are the upper and lower densities of the algorithmic information that is shared by $S$ and $T$.  In this paper we investigate the relationships between mutual dimension and {\it coupled randomness}, which is the algorithmic randomness of two sequences $R_1$ and $R_2$ with respect to probability measures that may be dependent on one another.  For a restricted but interesting class of coupled probability measures we prove an explicit formula for the mutual dimensions $mdim(R_1:R_2)$ and $Mdim(R_1:R_2)$, and we show that the condition $Mdim(R_1:R_2) = 0$ is necessary but not sufficient for $R_1$ and $R_2$ to be independently random.

We also identify conditions under which Billingsley generalizations of the mutual dimensions $mdim(S:T)$ and $Mdim(S:T)$ can be meaningfully defined; we show that under these conditions these generalized mutual dimensions have the ``correct" relationships with the Billingsley generalizations of $dim(S)$, $Dim(S)$, $dim(T)$, and $Dim(T)$ that were developed and applied by Lutz and Mayordomo; and we prove a divergence formula for the values of these generalized mutual dimensions.
\end{abstract}
\section{Introduction}
Algorithmic information theory combines tools from the theory of computing and classical Shannon information theory to create new methods for quantifying information in an expanding variety of contexts. Two notable and related strengths of this approach that were evident from the beginning \cite{Kolm65} are its abilities to quantify the information in and to assess the randomness of \emph{individual} data objects.

Some useful mathematical objects, such as real numbers and execution traces of nonterminating processes, are intrinsically infinitary. The randomness of such objects was successfully defined very early \cite{Mart66} but it was only at the turn of the present century \cite{jLutz03a,Lutz03b} that ideas of Hausdorff were reshaped in order to define \emph{effective fractal dimensions}, which quantify the densities of algorithmic information in such infinitary objects. Effective fractal dimensions, of which there are now many, and their relations with randomness are now a significant part of algorithmic information theory \cite{DowHir10}.

Many scientific challenges require us to quantify not only the information in an individual object, but also the information \emph{shared} by two objects. The \emph{mutual information} $I(X;Y)$ of classical Shannon information theory does something along these lines, but for two probability spaces of objects rather than for two individual objects \cite{bCovTho06}. The \emph{algorithmic mutual information} $I(x:y)$, defined in terms of Kolmogorov complexity \cite{bLiVit08}, quantifies the information shared by two individual finite objects $x$ and $y$.

The present authors recently developed the \emph{mutual dimensions} $mdim(x:y)$ and $Mdim(x:y)$ in order to quantify the density of algorithmic information shared by two infinitary objects $x$ and $y$ \cite{CasLut15}. The objects $x$ and $y$ of interest in \cite{CasLut15} are points in Euclidean spaces $\mathbb{R}^n$ and their images under computable functions, so the fine-scale geometry of $\mathbb{R}^n$ plays a major role there.

In this paper we investigate mutual dimensions further, with objectives that are more conventional in algorithmic information theory. Specifically, we focus on the lower and upper mutual dimensions $mdim(S:T)$ and $Mdim(S:T)$ between two sequences $S,T \in \Sigma^{\infty}$, where $\Sigma$ is a finite alphabet. (If $\Sigma = \{0,1\}$, then we write $\bf{C}$ for the \emph{Cantor space} $\Sigma^{\infty}$.) The definitions of these mutual dimensions, which are somewhat simpler in $\Sigma^{\infty}$ than in $\mathbb{R}^n$, are implicit in \cite{CasLut15} and explicit in section 2 below.

Our main objective here is to investigate the relationships between mutual dimension and \emph{coupled randomness}, which is the algorithmic randomness of two sequences $R_1$ and $R_2$ with respect to probability measures that may be dependent on one another. In section 3 below we formulate coupled randomness precisely, and we prove our main theorem, Theorem \ref{random resp}, which gives an explicit formula for $mdim(R_1:R_2)$ and $Mdim(R_1:R_2)$ in a restricted but interesting class of coupled probability measures. This theorem can be regarded as a ``mutual version'' of Theorem 7.7 of \cite{Lutz03b}, which in turn is an algorithmic extension of a classical theorem of Eggleston \cite{Eggl49,Bill65}. We also show in section 3 that $Mdim(R_1:R_2) = 0$ is a necessary, but not sufficient condition for two random sequences $R_1$ and $R_2$ to be independently random.

In 1960 Billingsley investigated generalizations of Hausdorff dimension in which the dimension itself is defined ``through the lens of'' a given probability measure \cite{Bill60,Caja82}. Lutz and Mayordomo developed the effective Billingsley dimensions $dim^{\nu}(S)$ and $Dim^{\nu}(S)$, where $\nu$ is a probability measure on $\Sigma^{\infty}$, and these have been useful in the algorithmic information theory of self-similar fractals \cite{jLutMay08,GLMM14}.

In section 4 we investigate ``Billingsley generalizations'' $mdim^{\nu}(S:T)$ and $Mdim^{\nu}(S:T)$ of $mdim(S:T)$ and $Mdim(S:T)$, where $\nu$ is a probability measure on $\Sigma^{\infty} \times \Sigma^{\infty}$. These turn out to make sense only when $S$ and $T$ are \emph{mutually normalizable}, which means that the normalizations implicit in the fact that these dimensions are \emph{densities} of shared information are the same for $S$ as for $T$. We prove that, when mutual normalizability is satisfied, the Billingsley mutual dimensions $mdim^{\nu}(S:T)$ and $Mdim^{\nu}(S:T)$ are well behaved. We also identify a sufficient condition for mutual normalizability, make some preliminary observations on when it holds, and prove a divergence formula, analogous to a theorem of \cite{Lutz11}, for computing the values of the Billingsley mutual dimensions in many cases.
\section{Mutual Dimension in Cantor Spaces}
In \cite{CasLut15} the authors defined and investigated the mutual dimension between points in Euclidean space. The purpose of this section is to develop a similar framework for the mutual dimension between sequences.

Let $\Sigma = \{0,1,\ldots k-1\}$ be our alphabet and $\Sigma^{\infty}$ denote the set of all $k$-ary sequences over $\Sigma$. For $S,T \in \Sigma^{\infty}$, the notation $(S,T)$ represents the sequence in $(\Sigma \times \Sigma)^{\infty}$ obtained after pairing each symbol in $S$ with the symbol in $T$ located at the same position. For $S \in \Sigma^{\infty}$, let
\begin{equation}\label{real rep}
\alpha_S = \displaystyle\sum\limits_{i=0}^{\infty}S[i]k^{-(i+1)} \in [0,1].
\end{equation}
Informally, we say that $\alpha_S$ is the \emph{real representation} of $S$. Note that, in this section, we often use the notation $S \upharpoonright r$ to mean the first $r \in \mathbb{N}$ symbols of a sequence $S$.

We begin by reviewing some definitions and theorems of algorithmic information theory. All Turing machines are assumed to be self-delimiting.

\begin{definition}
The \emph{conditional Kolmogorov complexity} of $u \in \Sigma^*$ given $w \in \Sigma^*$ with respect to a Turing machine $M$ is
\[
K_M(u\,|\,w) = \min \{|\pi|\, \big|\,\pi \in \{0,1\}^* \text{ and } M(\pi,w)=u\}.
\]
\end{definition}
We define the \emph{Kolmogorov complexity} of $u \in \Sigma^*$ with respect to a Turing machine $M$ by $K_M(u) = K_M(u\,|\,\lambda)$, where $\lambda$ is the \emph{empty string}.

\begin{definition}
A Turing machine $M'$ is \emph{optimal} if, for all Turing machines $M$, there exists a constant $c \in \{0,1\}^*$ such that
\[
K_{M'}(u) \leq K_M(u) + c,
\]
for all $u \in \{0,1\}^*$.
\end{definition}

The following theorem is an important observation in algorithmic information theory.

\begin{theorem}[Optimality]
Every universal Turing machine is optimal.
\end{theorem}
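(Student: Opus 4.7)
The plan is to unpack the definition of universal Turing machine and exhibit, for each Turing machine $M$, an explicit constant $c$ witnessing optimality. Since all machines here are self-delimiting, the argument is the classical prefix-free simulation argument.

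First, I would recall what it means for a Turing machine $U$ to be universal: there is a computable assignment $M \mapsto \sigma_M \in \{0,1\}^*$ of (self-delimiting) descriptions of Turing machines such that, for every Turing machine $M$ and every string $\pi \in \{0,1\}^*$,
\[
U(\sigma_M \pi, w) = M(\pi, w)
\]
whenever $M(\pi,w)$ is defined. Because the machines are self-delimiting, the domain of $U$ is a prefix-free set and $U$ can unambiguously parse the prefix $\sigma_M$ before running $M$ on the remainder.

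Next, fix an arbitrary Turing machine $M$ and let $c_M = \sigma_M$, viewed as the constant required by the definition of optimality. For any $u, w \in \Sigma^*$, pick a program $\pi^\ast \in \{0,1\}^*$ with $M(\pi^\ast, w) = u$ and $|\pi^\ast| = K_M(u \mid w)$. Then $U(\sigma_M \pi^\ast, w) = M(\pi^\ast, w) = u$, so the string $\sigma_M \pi^\ast$ is a $U$-program for $u$ given $w$. By definition of $K_U$,
\[
K_U(u \mid w) \;\leq\; |\sigma_M \pi^\ast| \;=\; |\sigma_M| + K_M(u \mid w).
\]
Specializing to $w = \lambda$ gives $K_U(u) \leq K_M(u) + |\sigma_M|$ for every $u \in \{0,1\}^*$, establishing optimality with constant $c = \sigma_M$.

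There is not really a hard step here; the only point that requires care is making sure that the concatenation $\sigma_M \pi^\ast$ is still a legal (prefix-free) input to $U$, which is immediate from the self-delimiting convention and the fact that $\sigma_M$ is chosen from a prefix-free set of machine descriptions. Everything else is a direct consequence of the universal simulation property and the definition of $K_M$.
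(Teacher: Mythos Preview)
Your argument is the standard, correct proof of the invariance theorem for prefix-free machines; the only minor slip is writing $c = \sigma_M$ and $c_M = \sigma_M$ when you mean $c = |\sigma_M|$, but the computation on the next line makes your intent clear. Note that the paper itself does not supply a proof of this theorem at all---it is stated as a well-known fact in algorithmic information theory and then immediately used---so there is no in-paper argument to compare against; your write-up simply fills in the omitted classical proof.
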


For the duration of this paper, we let $U$ be some fixed universal Turing machine.

\begin{definition}
The \emph{conditional Kolmogorov complexity} of $u \in \Sigma^*$ given $w \in \Sigma^*$ is
\[
K(u\,|\,w) = K_U(u\,|\,w).
\]
\end{definition}

The \emph{Kolmogorov complexity} of a string $u \in \Sigma^*$ is $K(u) = K(u\,|\,\lambda)$. For a detailed overview of Kolmogorov complexity and its properties, see \cite{bLiVit08}.

The following definition of the Kolmogorov complexity of sets of strings is also useful.

\begin{definition}[Definition](Shen and Vereshchagin \cite{jSheVer02}).
The \emph{Kolmogorov complexity} of a set $S \subseteq \Sigma^*$ is
\[
K(S) = \min \{K(u)\,|\,u \in S\}.
\]
\end{definition}

\begin{definition}
The \emph{lower} and \emph{upper dimensions} of $S \in \Sigma^{\infty}$ are
\begin{align*}
dim(S) = \displaystyle\liminf\limits_{u \rightarrow S}\frac{K(u)}{|u| \log |\Sigma|}
\end{align*}
and
\begin{align*}
Dim(S) = \displaystyle\limsup\limits_{u \rightarrow S}\frac{K(u)}{|u| \log |\Sigma|},
\end{align*}
respectively.
\end{definition}

We now proceed to prove several lemmas which describe how the dimensions of sequences and the dimensions of points in Euclidean space correspond to one another.

Keeping in mind that tuples of rationals in $\mathbb{Q}^n$ can be easily encoded as a string in $\Sigma^*$, we use the following definition of the Kolmogorov complexity of points in Euclidean space.

\begin{definition}
The \emph{Kolmogorov complexity} of $x \in \mathbb{R}^n$ at \emph{precision} $r \in \mathbb{N}$ is
\[
K_r(x) = K(B_{2^{-r}}(x) \cap \mathbb{Q}^n).
\]
\end{definition}

We recall a useful corollary from \cite{CasLut15} that is used in the proof of Lemma \ref{joint}.

\begin{corollary}\label{rs to r}
For all $x \in \mathbb{R}^n$ and $r,s \in \mathbb{N}$,
\[
K_{r+s}(x) \leq K_r(x) + o(r).
\]
\end{corollary}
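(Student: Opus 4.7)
The plan is to produce, from any good rational approximation of $x$ at scale $2^{-r}$, a good rational approximation at scale $2^{-(r+s)}$ using only a short additional program. The key observation is that the extra precision costs essentially $O(s)$ bits of ``local correction,'' which is $o(r)$ once $s$ is treated as fixed (or grows slowly) relative to $r$.

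First, I would fix a rational $q \in B_{2^{-r}}(x) \cap \mathbb{Q}^n$ witnessing $K(q) = K_r(x)$, and choose any $q^* \in B_{2^{-(r+s)}}(x) \cap \mathbb{Q}^n$ (for instance, a coordinatewise truncation of $x$ to $r+s$ base-$|\Sigma|$ digits). By the triangle inequality $\|q - q^*\| \leq 2^{-r} + 2^{-(r+s)} \leq 2^{-r+1}$, so the displacement $\delta = q^* - q$ lies in a cube of side $2^{-r+2}$ centered at the origin and can be encoded coordinatewise to precision $2^{-(r+s)}$ in $n(s + O(1))$ bits.

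Next, I would define a fixed Turing machine $M'$ that expects a self-delimiting program $\pi$ computing $q$, a self-delimiting code for $s$, and the bit-vector encoding of $\delta$, and outputs $q + \delta = q^* \in B_{2^{-(r+s)}}(x) \cap \mathbb{Q}^n$. Applying the Optimality Theorem to $M'$ then yields
\[
K_{r+s}(x) \leq K_r(x) + ns + O(\log s) + O(1).
\]
Interpreting $o(r)$ asymptotically in $r$ for each fixed $s$ (as $r \to \infty$), the right-hand side is $K_r(x) + o(r)$, which is the desired bound.

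The main obstacle is essentially bookkeeping rather than any deep idea: one must choose the self-delimiting scheme so that $M'$ parses $\pi$, $s$, and $\delta$ unambiguously without an extra $\Omega(\log r)$ overhead, and one must verify that the terms depending only on $s$ and $n$ are legitimately absorbed into $o(r)$. Given the self-delimiting conventions already fixed in the paper and the fact that the corollary is invoked in Lemma \ref{joint} only when $s$ is much smaller than $r$, this is routine.
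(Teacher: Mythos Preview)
The paper does not actually prove this corollary here; it is merely \emph{recalled} from \cite{CasLut15} and used as a black box in the proof of Lemma~\ref{joint} with $s=1$. So there is no in-paper argument to compare against, and your reading of the $o(r)$ as ``for each fixed $s$, as $r\to\infty$'' is exactly the interpretation the application requires.

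Your argument is the standard one and is correct in outline, but there is one bookkeeping point you should fix rather than wave away. In order to reconstruct a point of $B_{2^{-(r+s)}}(x)$ from your $n(s+O(1))$-bit encoding of the displacement, the machine $M'$ must know the \emph{scale} $2^{-(r+s)}$ at which those bits are to be interpreted. Knowing $s$ and the rational $q$ does not determine $r$: the minimum-complexity $q\in B_{2^{-r}}(x)\cap\mathbb{Q}^n$ can be a very simple rational that reveals nothing about $r$. Hence $M'$ must also be given a self-delimiting description of $r$ (or of $r+s$), at cost $O(\log r)$, and your displayed inequality should read
\[
K_{r+s}(x)\;\le\;K_r(x)+ns+O(\log r)+O(\log s)+O(1).
\]
Your closing remark that the self-delimiting scheme can be chosen to avoid an $\Omega(\log r)$ overhead is not correct in general, but it does not matter: $O(\log r)=o(r)$, so the conclusion stands. (This $K(r)$-type term is also present in the bound in \cite{CasLut15} from which the corollary is extracted.) You should also be a little more careful in step~4: encoding $\delta$ only ``to precision $2^{-(r+s)}$'' and then adding it to $q$ can land just outside $B_{2^{-(r+s)}}(x)$; the clean fix is to pick $q^*\in B_{2^{-(r+s+1)}}(x)$ and round $\delta$ to a grid of width $2^{-(r+s+c_n)}$ for a dimension-dependent constant $c_n$, which changes only the $O(1)$ term.
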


\begin{lemma}\label{joint}
There is a constant $c \in \mathbb{N}$ such that, for all $S,T \in \Sigma^{\infty}$ and $r \in \mathbb{N}$,
\begin{align*}
K((S, T)\upharpoonright r) = K_r(\alpha_S, \alpha_T) + o(r).
\end{align*}
\end{lemma}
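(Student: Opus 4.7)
The plan is to establish the asymptotic equality by proving matching upper bounds in each direction.

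For the easy direction, $K_r(\alpha_S,\alpha_T)\le K((S,T)\upharpoonright r)+O(1)$, I would take a shortest program for $(S,T)\upharpoonright r$ and prefix a short fixed routine that, from the decoded prefix, computes and outputs the rational pair $q=(q_1,q_2)\in\mathbb Q^2$ given by $q_1=\sum_{i<r}S[i]\,k^{-(i+1)}$ and $q_2=\sum_{i<r}T[i]\,k^{-(i+1)}$. Because each coordinate satisfies $|q_j-\alpha|\le k^{-r}\le 2^{-r}$, the pair $q$ lies in $B_{2^{-r}}(\alpha_S,\alpha_T)\cap\mathbb Q^2$, and the bound follows from the definition of $K_r$.

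For the harder direction, $K((S,T)\upharpoonright r)\le K_r(\alpha_S,\alpha_T)+o(r)$, I would take a shortest program $\pi$ producing some $q\in B_{2^{-r}}(\alpha_S,\alpha_T)\cap\mathbb Q^2$ that witnesses $K_r(\alpha_S,\alpha_T)$, and then decode $(S,T)\upharpoonright r$ from $q$. A ball of radius $2^{-r}$ about $(\alpha_S,\alpha_T)$ meets only $O(1)$ many cylinders of side $k^{-r}$ when $k=2$, so $O(1)$ advice bits single out the cylinder containing $(\alpha_S,\alpha_T)$. For $k>2$ I would first apply Corollary~\ref{rs to r} to replace $K_r$ by $K_{r+s}$ with $s$ large enough that $2^{-(r+s)}$ is comfortably smaller than $k^{-r}$, at additive cost $o(r)$, so that the resulting finer rational is trapped inside a unique $k^{-r}$-cylinder. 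The $k$-adic boundary case, where $\alpha_S$ or $\alpha_T$ admits two base-$k$ expansions and the map $S\mapsto\alpha_S$ fails to be injective, is a measure-zero exception resolved by $O(1)$ bits of advice naming the intended expansion. Reading off the first $r$ base-$k$ digit pairs of the selected cylinder produces $(S,T)\upharpoonright r$.

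The principal obstacle is the second direction: specifically, reconciling the $2^{-r}$ Euclidean precision on the real-representation side with the $k^{-r}$ symbolic precision on the sequence side, and handling the $k$-adic ambiguity. Corollary~\ref{rs to r} supplies exactly the precision conversion at $o(r)$ cost, while the $k$-adic ambiguity contributes only $O(1)$, so the two contributions combine into the $o(r)$ slack claimed in the statement. The two inequalities then yield the asymptotic equality.
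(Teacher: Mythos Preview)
Your two-inequality strategy is the same as the paper's, but you have the roles of the two directions reversed, and this creates one minor slip and one real gap.

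\textbf{Minor slip in your ``easy'' direction.} You assert that the rational pair $(q_1,q_2)$ built from the length-$r$ prefixes lies in $B_{2^{-r}}(\alpha_S,\alpha_T)$. Coordinatewise $|q_j-\alpha|\le k^{-r}\le 2^{-r}$ is fine, but the ball is Euclidean in $\mathbb{R}^2$, so the relevant bound is $\sqrt{2}\,k^{-r}$, which for $k=2$ exceeds $2^{-r}$. Hence you only get $q\in B_{2^{-(r-1)}}$ and thus $K_{r-1}(\alpha_S,\alpha_T)\le K((S,T)\upharpoonright r)+O(1)$; recovering $K_r$ costs the $o(r)$ supplied by Corollary~\ref{rs to r}. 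This is exactly what the paper does: it spends the $o(r)$ in the prefix-to-rational direction, not in the other one.

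\textbf{Real gap in your ``harder'' direction.} For $k>2$ a Euclidean ball of radius $2^{-r}$ meets on the order of $(k/2)^{2r}$ many $k$-adic cubes of side $k^{-r}$, so ``$O(1)$ advice bits'' does not suffice. Your fix is to invoke Corollary~\ref{rs to r} to pass from $K_r$ to $K_{r+s}$ with $s$ large enough that $2^{-(r+s)}\lesssim k^{-r}$, but this forces $s\ge r(\log_2 k-1)$, i.e.\ $s$ grows linearly in $r$. Corollary~\ref{rs to r} gives an $o(r)$ error only for fixed $s$ (for random $x$ one has $K_{r+s}(x)-K_r(x)\approx s$, so no uniform-in-$s$ bound of the form $o(r)$ can hold). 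Thus your precision-boosting step is not justified.

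The paper avoids this entirely: in the rational-to-prefix direction it builds an explicit decoder $M'$ that, from a rational lying in the correct $k$-adic cube, reads off the base-$k$ digits directly, handling the $(k-1)^\infty$ tail case by a case split; this yields $K((S,T)\upharpoonright r)\le K_r(\alpha_S,\alpha_T)+O(1)$ with no appeal to Corollary~\ref{rs to r} at all. So the needed idea you are missing is a direct digit-extraction argument in this direction rather than a precision-boost.
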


\begin{proof}
First we show that $K_r(\alpha_S,\alpha_T) \leq K((S,T) \upharpoonright r) + o(r)$.

Observe that
\begin{align*}
&\,\,\,\,\,\,\, \big| (\alpha_S, \alpha_T) - (\alpha_{S\upharpoonright r}, \alpha_{T\upharpoonright r}) \big|\\
&= \bigg | \bigg (\displaystyle\sum\limits_{i=0}^{\infty}S[i]k^{-(i+1)}, \displaystyle\sum\limits_{i=0}^{\infty}T[i]k^{-(i+1)} \bigg ) - \bigg (\displaystyle\sum\limits_{i=0}^{r-1}S[i]k^{-(i+1)}, \displaystyle\sum\limits_{i=0}^{r-1}S[i]k^{-(i+1)} \bigg ) \bigg|\\
&= \bigg | \bigg (\displaystyle\sum\limits_{i=r}^{\infty}S[i]k^{-(i+1)}, \displaystyle\sum\limits_{i=r}^{\infty}T[i]k^{-(i+1)} \bigg ) \bigg |\\
&\leq \bigg | \bigg (\displaystyle\sum\limits_{i=r}^{\infty}S[i]2^{-(i+1)}, \displaystyle\sum\limits_{i=r}^{\infty}T[i]2^{-(i+1)} \bigg ) \bigg |\\
&=|(2^{-r}, 2^{-r})|\\
&\leq 2^{1-r},
\end{align*}
which implies the inequality
\begin{align}\label{minusone}
K_{r-1}(\alpha_S,\alpha_T) \leq K(\alpha_{S\upharpoonright r}, \alpha_{T\upharpoonright r}).
\end{align}

Let $M$ be a Turing machine such that, if $U(\pi) = (u_0,w_0)(u_1,w_1)\cdots (u_{n-1},w_{n-1}) \in (\Sigma \times \Sigma)^*$,
\begin{align}\label{monpi}
M(\pi) = \bigg( \displaystyle\sum\limits^{n-1}_{i=0}u_i \cdot k^{-(i+1)}, \displaystyle\sum\limits^{n-1}_{i=0}w_i \cdot k^{-(i+1)} \bigg).
\end{align}
Let $c_M$ be an optimality constant for $M$ and $\pi \in \{0,1\}^*$ be a minimum-length program for $(S,T) \upharpoonright r$. By optimality and (\ref{monpi}),
\begin{align}
K(\alpha_{S\upharpoonright r},\alpha_{T\upharpoonright r}) &\leq K_M(\alpha_{S\upharpoonright r},\alpha_{T\upharpoonright r})\notag\\
																													 &\leq |\pi| + c_M \label{alphatoseq}\\
																													 &= K((S,T)\upharpoonright r) + c_M \notag.
\end{align}
Therefore, by Corollary \ref{rs to r}, (\ref{minusone}), and (\ref{alphatoseq}),
\begin{align*}
K_r(\alpha_S,\alpha_T) &\leq K_{r-1}(\alpha_S,\alpha_T) + o(r)\\
											 &\leq K(\alpha_{S\upharpoonright r}, \alpha_{T \upharpoonright r}) + o(r)\\
											 &\leq K((S,T)\upharpoonright r) + o(r).
\end{align*}

Next we prove that $K((S,T)\upharpoonright r) \leq K_r(\alpha_S, \alpha_T) + O(1)$. We consider the case where $S = x(k-1)^{\infty}$, $T \neq y(k-1)^{\infty}$, and $x \in \{0,1\}^*$ and $y \in \{0,1\}^*$ are either empty or end with a symbol other than $(k-1)$, i.e., $S$ has a tail that is an infinite sequence of the largest symbol in $\Sigma$ and $T$ does not. Let $M'$ be a Turing machine such that, if $U(\pi) = \langle q,p \rangle$ for any two rationals $q,p \in [0,1]$,
\begin{align}\label{m'def}
M'(\pi) = (u_0,w_0)(u_1,w_1)\cdots (u_{r-1},w_{r-1}) \in (\Sigma \times \Sigma)^*,
\end{align}
where $M'$ operates by running $\pi$ on $U$ to obtain $(q,p)$ and searching for strings $u = u_0u_1\cdots u_{r-1}$ and $w = w_0w_1\cdots w_{r-1}$ such that
\begin{equation}\label{first}
q = \displaystyle\sum\limits^{|x|-1}_{i=0}u_ik^{-(i+1)} + (k-1)k^{-(|x|+1)}, u_{|x|-1} < (k-1), \text{ and } u_i = (k-1) \text{ for } i \geq |x|,
\end{equation}
and
\begin{equation}\label{second}
w_i \cdot k^{-(i+1)} \leq p - (w_0\cdot k^{-1} + w_1\cdot k^{-2} + \cdots + w_{i-1}\cdot k^{-i}) < (w_i + 1) \cdot k^{-(i+1)}
\end{equation}
for $0 \leq i < r$.

Let $c_{M'}$ be an optimality constant for $M'$ and $m, t \in \mathbb{N}$ such that $m, t \leq k^r - 1$ and
\begin{equation}\label{alpha}
(\alpha_S, \alpha_T) \in [m \cdot k^{-r}, (m+1) \cdot k^{-r}) \times [t \cdot k^{-r}, (t+1) \cdot k^{-r}).
\end{equation}
Let
\begin{equation}\label{rats}
(q,p) \in B_{k^{-r}}(\alpha_S, \alpha_T) \cap [m \cdot k^{-r}, (m+1) \cdot k^{-r}) \times [t \cdot k^{-r}, (t+1) \cdot k^{-r}) \cap \mathbb{Q}^2,
\end{equation}
and let $\pi$ be a minimum-length program for $(q,p)$.
First we show that $u_i = S[i]$ for all $0 \leq i < r$. We do not need to consider the case where $i \geq |x|$ because (\ref{first}) assures us that $u_i = S[i]$. Thus we will always assume that $i < |x|$. If $u_0 \neq S[0]$, then, by (\ref{first}), 
\[
q \notin [S[0]\cdot k^{-1}, (S[0]+1)\cdot k^{-1}).
\]
By (\ref{alpha}), this implies that
\begin{align*}
q \notin [m\cdot k^{-r}, (m+1)\cdot k^{-r}),
\end{align*}
which contradicts (\ref{rats}). Now assume that $u_n = S[n]$ for all $n \leq i < r - 1$. If $u_{i+1} \neq S[i+1]$, then, by (\ref{first}),
\[
q \notin \bigg[\displaystyle\sum\limits_{n=0}^{i}S[n]\cdot k^{-(i+1)} + S[i+1]\cdot k^{-(i+2)}, \displaystyle\sum\limits_{n=0}^{i}S[n]\cdot k^{-(i+1)} + (S[i+1]+1)\cdot k^{-(i+2)} \bigg).
\]
By (\ref{alpha}), this implies that
\begin{align*}
q \notin [m\cdot k^{-r}, (m+1)\cdot k^{-r}),
\end{align*}
which contradicts (\ref{rats}). Therefore, $u_i = S[i]$ for all $0 \leq i < r$. A similar argument shows that $w_i = T[i]$, so we conclude that $M'(q,p) = (S,T) \upharpoonright r$.

By optimality, (\ref{m'def}), and (\ref{rats}),
\begin{align*}
K((S,T)\upharpoonright r) &\leq K_{M'}((S,T)\upharpoonright r) + c_{M'}\\
											&\leq |\pi| + c_{M'}\\
											&= K(q,p) + c_{M'}\\
											&= K(B_{2^{-r}}(\alpha_S, \alpha_T) \cap [0,1]^2) + c_{M'}\\
											&\leq K_r(\alpha_S, \alpha_T) + O(1),
\end{align*}
where the last inequality holds simply because we can design a Turing machine to transform any point from outside the unit square to its edge. All other cases for $S$ and $T$ can be proved in a similar manner. \qedhere
\end{proof}

\begin{lemma}\label{KolEquiv}
There is a constant $c \in \mathbb{N}$ such that, for all $S \in \Sigma^{\infty}$ and $r \in \mathbb{N}$,
\begin{align*}
K(S\upharpoonright r) = K_r(\alpha_S) + c.
\end{align*}
\end{lemma}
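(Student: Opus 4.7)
My plan is to adapt the two-directional argument from the proof of Lemma \ref{joint} to the single-sequence case, exploiting the fact that the distance bound $|\alpha_S - \alpha_{S\upharpoonright r}| \leq k^{-r}$ is already tight enough in one dimension to avoid the application of Corollary \ref{rs to r} that produced the $o(r)$ slack in Lemma \ref{joint}; this is what lets the error in the present lemma be a single additive constant rather than $o(r)$.

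For the inequality $K_r(\alpha_S) \leq K(S\upharpoonright r) + O(1)$, I will construct a Turing machine $M$ that, on input $\pi$ with $U(\pi) = u_0 u_1 \cdots u_{n-1} \in \Sigma^*$, outputs the rational $\sum_{i=0}^{n-1} u_i \cdot k^{-(i+1)}$. Taking $\pi$ to be a minimum-length program for $S \upharpoonright r$, the output is $\alpha_{S\upharpoonright r}$, and a direct calculation of the same form as in Lemma \ref{joint}, but with one sequence rather than two, gives $|\alpha_S - \alpha_{S \upharpoonright r}| \leq k^{-r} \leq 2^{-r}$ (using $k \geq 2$), so $\alpha_{S\upharpoonright r} \in B_{2^{-r}}(\alpha_S) \cap \mathbb{Q}$. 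By optimality, $K_r(\alpha_S) \leq K(\alpha_{S \upharpoonright r}) \leq K_M(\alpha_{S\upharpoonright r}) + c_M \leq K(S\upharpoonright r) + c_M$. In contrast to Lemma \ref{joint}, where the distance bound was $2^{1-r}$ and forced a detour through $K_{r-1}$ and Corollary \ref{rs to r}, here the error is genuinely $O(1)$.

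For the reverse inequality $K(S\upharpoonright r) \leq K_r(\alpha_S) + O(1)$, I will build the single-sequence analog of the Turing machine $M'$ used in Lemma \ref{joint}: on input a program $\pi$ for a suitably chosen rational $q$ close to $\alpha_S$, $M'$ performs symbol-by-symbol greedy $k$-ary decoding of $q$ to produce $u_0 u_1 \cdots u_{r-1}$, subject to the uniqueness constraint (\ref{first}) that rules out alternate decodings when $\alpha_S$ sits on a grid point $m k^{-r}$. The induction on digit index carried out in Lemma \ref{joint} to show $u_i = S[i]$ for $0 \leq i < r$ goes through essentially verbatim (and in fact more cleanly, since we only need the analog of (\ref{second}) for $S$, with no coupled constraint for a second sequence). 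Optimality then yields $K(S\upharpoonright r) \leq |\pi| + c_{M'} = K_r(\alpha_S) + O(1)$.

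The main obstacle is the boundary case $S = x(k-1)^{\infty}$, in which $\alpha_S$ lies exactly at a grid point and two distinct prefixes have real representations inside $B_{2^{-r}}(\alpha_S)$; this is precisely the case singled out for explicit treatment in Lemma \ref{joint}, and I plan to reuse the strategy of baking the constraint (\ref{first}) into $M'$ to eliminate the ambiguity. The remaining cases (where $S$'s tail is not eventually $(k-1)$) are strictly easier and handled by the same inductive argument, mirroring the ``all other cases can be proved in a similar manner'' remark at the end of Lemma \ref{joint}. Combining the two directions then gives the claimed equality up to a single constant $c \in \mathbb{N}$.
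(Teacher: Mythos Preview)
Your proposal is sound but takes a genuinely different route from the paper. The paper does not re-run the two-directional argument of Lemma~\ref{joint} in one dimension; instead it reduces to Lemma~\ref{joint} wholesale by pairing $S$ with the all-zeros sequence $0^{\infty}$, observing that $K(S\upharpoonright r)=K((S,0^{\infty})\upharpoonright r)+c_1$ and $K_r(\alpha_S,0)=K_r(\alpha_S)+c_2$ for constants $c_1,c_2$, and then invoking Lemma~\ref{joint} directly. This reduction is much shorter but inherits the $o(r)$ slack from Lemma~\ref{joint}: the paper's own proof in fact concludes with $K(S\upharpoonright r)=K_r(\alpha_S)+o(r)$, not the $+c$ advertised in the statement. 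Your direct one-dimensional adaptation, by contrast, genuinely sidesteps the detour through Corollary~\ref{rs to r} in the forward direction (since $|\alpha_S-\alpha_{S\upharpoonright r}|\le k^{-r}\le 2^{-r}$ with no $\sqrt{2}$ factor), and so actually delivers the sharper $O(1)$ bound the lemma claims. For the paper's purposes either conclusion suffices, as Lemma~\ref{KolEquiv} is only ever used after dividing by $r$ and passing to a limit.
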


\begin{proof}
Let $0^{\infty}$ represent the sequence containing all 0's. It is clear that there exist constants $c_1,c_2 \in \mathbb{N}$ such that
\[
K(S\upharpoonright r) = K((S,0^{\infty})\upharpoonright r) + c_1
\]
and
\[
K_r(\alpha_S, 0) = K_r(\alpha_S) + c_2.
\]
Therefore, by the above inequalities and Lemma \ref{joint},
\begin{align*}
K(S\upharpoonright r) &= K((S,0^{\infty})\upharpoonright r) + c_1\\
											&= K(\alpha_S, 0) + o(r) + c_1\\
											&= K_r(\alpha_S) + o(r) + c_1 + c_2\\
											&= K_r(\alpha_S) + o(r). \qedhere
\end{align*}
\end{proof}

\begin{definition}
For any point $x \in \mathbb{R}^n$, the \emph{lower} and \emph{upper dimensions} of $x$ are
\begin{align*}
dim(x) = \displaystyle\liminf\limits_{r \rightarrow \infty}\frac{K_r(x)}{r}
\end{align*}
and
\begin{align*}
Dim(x) = \displaystyle\limsup\limits_{r \rightarrow \infty}\frac{K_r(x)}{r},
\end{align*}
respectively.
\end{definition}

The next two corollaries describe principles that relate the dimensions of sequences to the dimensions of the sequences' real representations. The first follows from Lemma \ref{joint} and the second follows from Lemma \ref{KolEquiv}.

\begin{corollary}\label{jointDimEquiv}
For all $S, T \in \Sigma^{\infty}$,
\begin{align*}
dim(S,T) = dim(\alpha_S, \alpha_T)\,\,\, \text{and} \,\,\, Dim(S,T) = Dim(\alpha_S, \alpha_T).
\end{align*}
\end{corollary}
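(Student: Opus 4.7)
The plan is to derive both equalities directly from Lemma \ref{joint}, which already does the substantive work by establishing that
\begin{align*}
K((S,T)\upharpoonright r) = K_r(\alpha_S,\alpha_T) + o(r).
\end{align*}

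First, I would unfold the definitions on both sides. The pair $(S,T)$ is viewed as a sequence over $\Sigma \times \Sigma$, so $dim(S,T)$ and $Dim(S,T)$ are the $\liminf$ and $\limsup$, respectively, of $K((S,T)\upharpoonright r)$ divided by the appropriate length-scaled normalization. On the Euclidean side, $dim(\alpha_S,\alpha_T)$ and $Dim(\alpha_S,\alpha_T)$ are the corresponding $\liminf$ and $\limsup$ of $K_r(\alpha_S,\alpha_T)/r$.

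Second, I would divide the identity from Lemma \ref{joint} by the common normalization (linear in $r$), obtaining
\begin{align*}
\frac{K((S,T)\upharpoonright r)}{r} = \frac{K_r(\alpha_S,\alpha_T)}{r} + \frac{o(r)}{r}.
\end{align*}
Since $o(r)/r \to 0$ as $r \to \infty$, the standard fact that adding a null sequence does not alter $\liminf$ or $\limsup$ immediately yields both $dim(S,T) = dim(\alpha_S,\alpha_T)$ and $Dim(S,T) = Dim(\alpha_S,\alpha_T)$ in one stroke; the $\liminf$ case gives the first equality and the $\limsup$ case gives the second.

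The entire argument is essentially bookkeeping, because the substantive content is already packaged into Lemma \ref{joint}. The only step requiring any care is verifying that the $o(r)$ error indeed vanishes after normalization, but this is immediate since the denominator grows linearly in $r$ while the error is sublinear. Consequently there is no genuine obstacle, and the corollary follows in a few lines.
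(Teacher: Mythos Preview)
Your proposal is correct and takes essentially the same approach as the paper: the paper simply states that the corollary ``follows from Lemma \ref{joint}'' without further elaboration, and your argument is exactly the natural unpacking of that claim---divide the asymptotic identity by the linear normalization and pass to $\liminf$ and $\limsup$.
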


\begin{corollary}\label{dimEquiv}
For any sequence $S \in \Sigma^{\infty}$,
\begin{align*}
dim(S) = dim(\alpha_S).
\end{align*}
\end{corollary}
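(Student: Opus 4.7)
The plan is to read this off almost directly from Lemma \ref{KolEquiv}, mimicking the way Corollary \ref{jointDimEquiv} is obtained from Lemma \ref{joint}. First I would rewrite both quantities so they are $\liminf$s over a single precision parameter $r \to \infty$. Interpreting the notation $u \to S$ in the definition of $dim(S)$ as $u$ ranging over the prefixes $S \upharpoonright r$ (the convention implicitly used in the joint case), the two definitions become
\[
dim(S) = \liminf_{r \to \infty} \frac{K(S \upharpoonright r)}{r \log |\Sigma|} \qquad \text{and} \qquad dim(\alpha_S) = \liminf_{r \to \infty} \frac{K_r(\alpha_S)}{r}.
\]

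Next I would plug in the equivalence supplied by Lemma \ref{KolEquiv}, namely $K(S \upharpoonright r) = K_r(\alpha_S) + o(r)$ (this is what that lemma's proof actually establishes, even though the displayed statement writes the additive error as a constant $c$). Dividing through by $r$, the $o(r)/r$ term tends to zero and therefore contributes nothing to the limit infimum, so $\liminf_r K(S \upharpoonright r)/r = \liminf_r K_r(\alpha_S)/r$, from which the equality of the two dimensions follows after matching normalizations.

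The proof is thus essentially a one-line substitution together with the standard observation that adding an $o(r)$ term inside a quotient with denominator $r$ does not affect the $\liminf$. I do not anticipate any real obstacle. The only point that deserves a brief sanity check is the scope of ``$u \to S$'' in the definition of $dim(S)$; once that is read as ranging over prefixes of $S$, Lemma \ref{KolEquiv} applies term-by-term and the corollary drops out.
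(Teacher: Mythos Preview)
Your proposal is correct and is exactly the approach the paper intends: the corollary is stated as an immediate consequence of Lemma~\ref{KolEquiv}, and your derivation---rewrite both dimensions as $\liminf$s in the precision parameter $r$, substitute $K(S\upharpoonright r)=K_r(\alpha_S)+o(r)$, and observe that the $o(r)/r$ term vanishes---is precisely that derivation spelled out. Your remark that the displayed ``$+\,c$'' in Lemma~\ref{KolEquiv} is really ``$+\,o(r)$'' (as its proof shows) is also on point; the only phrase to tighten is ``after matching normalizations,'' which should be made explicit about the $\log|\Sigma|$ factor rather than left implicit.
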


\begin{lemma}\label{yxlemma}
There is a constant $c \in \mathbb{N}$ such that, for all $x,y \in \{0,1\}^*$,
\begin{align*}
K(y\,|\,x) \leq K(y\,|\,\langle x,K(x) \rangle) + K(K(x)) + c.
\end{align*}
\end{lemma}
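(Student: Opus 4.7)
The plan is to exploit the prefix-free (self-delimiting) structure of our Turing machines so that we can concatenate two programs: a short one that, from $x$ alone, produces $K(x)$, and a second that produces $y$ from $\langle x, K(x)\rangle$. Because both pieces are self-delimiting, their concatenation can be parsed unambiguously, and optimality then absorbs the overhead into a constant.

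Concretely, I would construct a Turing machine $M$ that, on input $\pi$ and oracle string $x$, reads a self-delimiting prefix $\pi_1$ of $\pi$, simulates $U(\pi_1)$ to obtain a natural number $n$, then simulates $U(\pi_2, \langle x, n\rangle)$ on the remaining suffix $\pi_2$ and outputs whatever that simulation outputs. Choose $\pi_1$ to be a shortest program for the number $K(x)$ (so $|\pi_1| = K(K(x))$ and $U(\pi_1) = K(x)$), and choose $\pi_2$ to be a shortest program witnessing $K(y \mid \langle x, K(x)\rangle)$ (so $|\pi_2| = K(y \mid \langle x, K(x)\rangle)$ and $U(\pi_2, \langle x, K(x)\rangle) = y$). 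Then $M(\pi_1\pi_2, x) = y$, whence
\[
K_M(y \mid x) \leq |\pi_1| + |\pi_2| = K(K(x)) + K(y \mid \langle x, K(x)\rangle).
\]
Applying optimality of $U$ relative to $M$ yields a constant $c = c_M$ such that
\[
K(y \mid x) \leq K_M(y \mid x) + c \leq K(y \mid \langle x, K(x)\rangle) + K(K(x)) + c,
\]
which is exactly what we want.

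The only genuine subtlety, which I would verify carefully but not belabor, is that the concatenation $\pi_1\pi_2$ is well-defined and parsable by $M$: this is precisely the standard fact that self-delimiting programs form a prefix code, so $M$ can recover $\pi_1$ by reading just enough symbols to trigger a halt of $U$ on the prefix, and then pass the rest as $\pi_2$. No non-trivial calculation is required; the proof is essentially a one-machine construction followed by an appeal to optimality.
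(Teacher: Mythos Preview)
Your proposal is correct and follows essentially the same approach as the paper: both construct a machine $M$ that, given $x$, first runs a self-delimiting program $\pi_1$ for $K(x)$ and then runs a program $\pi_2$ for $y$ given $\langle x, K(x)\rangle$, and then invoke optimality to obtain the constant $c$. Your version is slightly more explicit about why the prefix-free structure makes the concatenation $\pi_1\pi_2$ parsable, but the argument is otherwise identical.
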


\begin{proof}
Let $M$ be a Turing machine such that, if $U(\pi_1) = K(x)$ and $U(\pi_2,\langle x,K(x) \rangle) = y$,
\begin{align*}
M(\pi_1\pi_2,x) = y.
\end{align*}
Let $c_M \in \mathbb{N}$ be an optimality constant of $M$. Assume the hypothesis, and let $\pi_1$ be a minimum-length program for $K(x)$ and $\pi_2$ be a minimum-length program for $y$ given $x$ and $K(x)$. By optimality,
\begin{align*}
K(y\,|\,x) &\leq K_M(y\,|\,x) + c_M\\
			 &\leq |\pi_1\pi_2| + c_M\\
			 &= K(y\,|\,\langle x,K(x) \rangle) + K(K(x)) + c,
\end{align*}
where $c = c_M$. \qedhere
\end{proof}

\begin{lemma}\label{olemma}
For all $x \in \{0,1\}^*$, $K(K(x)) = o(|x|)$ as $|x| \rightarrow \infty$.
\end{lemma}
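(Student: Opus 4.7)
The plan is to bound $K(K(x))$ from above by a quantity that grows only logarithmically in $|x|$, which is of course $o(|x|)$. Since $K(x)$ is just a natural number, this amounts to combining the trivial linear upper bound on $K(x)$ with the standard logarithmic upper bound on the Kolmogorov complexity of a natural number.

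First, I would establish that $K(x) = O(|x|)$. This is the routine ``printing'' bound: there is a Turing machine that, on input the self-delimiting encoding of $x$, outputs $x$; combining this with the optimality constant of $U$ yields $K(x) \leq |x| + 2\log|x| + O(1)$, and in particular $K(x) \leq c|x|$ for some constant $c$ and all sufficiently long $x$.

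Next, I would record the standard fact that for every natural number $n$ one has $K(n) \leq \log n + 2\log\log n + O(1)$. This follows from the existence of a Turing machine that reads a self-delimited binary description of $n$ (of length $\log n + 2\log\log n + O(1)$) and prints it, again invoked through the optimality of $U$.

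Finally, I would chain these two inequalities: substituting $n = K(x)$ gives
\[
K(K(x)) \leq \log K(x) + 2\log\log K(x) + O(1) \leq \log(c|x|) + 2\log\log(c|x|) + O(1),
\]
and the right-hand side is $O(\log |x|)$, which is $o(|x|)$ as $|x| \to \infty$. There is no real obstacle here: the argument is essentially a bookkeeping combination of two standard upper bounds, and the only point requiring mild care is making sure the ``logarithm of a logarithm'' correction term is handled so that the inequality holds uniformly for all sufficiently long $x$.
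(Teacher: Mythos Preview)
Your proposal is correct and follows essentially the same approach as the paper: bound $K(x)$ linearly in $|x|$, bound $K(n)$ logarithmically in $n$, and chain the two to get $K(K(x)) = O(\log|x|) = o(|x|)$. The only difference is cosmetic---you include the $2\log\log$ prefix-free correction terms explicitly, whereas the paper absorbs everything into additive constants and writes $K(K(x)) \leq \log K(x) + c_1 \leq \log(|x| + c_2) + c_1$.
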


\begin{proof}
There exist constants $c_1,c_2 \in \mathbb{N}$ such that
\begin{align*}
K(K(x)) &\leq \log{K(x)} + c_1\\
				&\leq \log{(|x| + c_2)} + c_1\\
				&= o(|x|).
\end{align*}
as $|x| \rightarrow \infty$.
\end{proof}

The following lemma is well-known and can be found in \cite{bLiVit08}. \qedhere

\begin{lemma}\label{fact1}
There is a constant $c \in \mathbb{N}$ such that, for all $x,y \in \{0,1\}^*$,
\begin{align*}
K(x,y) = K(x) + K(y\,|\, x,K(x)) + c.
\end{align*}
\end{lemma}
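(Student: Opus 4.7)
The plan is to establish the two one-sided bounds
\[
K(x,y) \leq K(x) + K(y\,|\,x,K(x)) + O(1)
\]
and
\[
K(x,y) \geq K(x) + K(y\,|\,x,K(x)) - O(1)
\]
separately; together they give the stated equality up to an additive constant.

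For the upper bound I would mimic the construction in Lemma \ref{yxlemma}. Build a self-delimiting Turing machine $M$ which, on input $\pi_1\pi_2$, runs $U(\pi_1)$ to obtain $x$, reads $|\pi_1|$ off as a value for $K(x)$, then runs $U(\pi_2,\langle x,|\pi_1|\rangle)$ to obtain $y$, and outputs $\langle x,y\rangle$. Self-delimitation of $U$ lets $M$ parse $\pi_1$ and $\pi_2$ from their concatenation unambiguously. Taking $\pi_1$ and $\pi_2$ to be minimum-length programs for $x$ and for $y$ given $\langle x,K(x)\rangle$, optimality of $U$ yields the stated inequality with $c = c_M$.

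For the lower bound I would invoke the classical counting argument of G\'acs and Levin. Set $k = K(x,y)$ and consider the computably enumerable set $A_k = \{(u,v):K(u,v)\leq k\}$, enumerable uniformly in $k$. For each $u$, let $m_k(u)$ be the size of the $u$-section of $A_k$; then $\sum_u m_k(u) \leq 2^{k+1}$. Given $\langle x,k\rangle$, an enumeration of $A_k$ lets one specify $y$ by its index in the $x$-section, giving $K(y\,|\,x,k) \leq \log m_k(x) + O(1)$. Dually, with $s = \lceil \log m_k(x)\rceil$, at most $2^{k-s+2}$ strings $u$ satisfy $m_k(u)\geq 2^s$, so $x$ can be named by its index in that list, giving $K(x\,|\,k,s) \leq k - s + O(1)$. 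Adding the two bounds and observing that, once $K(x)$ is provided as side information, both $k$ and $s$ can be recovered from short auxiliary data rather than encoded naively, absorbs the extra overhead into $O(1)$.

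The main obstacle is the bookkeeping in the lower bound that keeps the slack purely additive rather than logarithmic in $k$. This is precisely why the lemma uses $K(y\,|\,x,K(x))$ instead of $K(y\,|\,x)$: conditioning on $K(x)$ supplies just enough information to avoid spelling out $k$ or $s$ in binary, which would otherwise incur an $O(\log k)$ cost. The careful execution of this argument is carried out in Li and Vit\'anyi \cite{bLiVit08}, whose treatment I would follow in detail.
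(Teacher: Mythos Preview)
The paper does not supply its own proof of this lemma; it simply records that the result is well known and cites \cite{bLiVit08}. Your outline of the standard G\'acs--Levin symmetry-of-information argument is precisely the proof found in that reference, so your proposal is consistent with---and indeed more detailed than---the paper's treatment.
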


The following is a corollary of Lemma \ref{fact1}.

\begin{corollary}\label{fact2}
There is a constant $c \in \mathbb{N}$ such that, for all $x,y \in \{0,1\}^*$,
\begin{align*}
K(x,y) \leq K(x) + K(y\,|\,x) + c.
\end{align*}
\end{corollary}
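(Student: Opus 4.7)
The plan is to combine Lemma \ref{fact1} with the easy observation that conditioning on the pair $\langle x, K(x)\rangle$ can only make the Kolmogorov complexity smaller (up to an additive constant) than conditioning on $x$ alone, since any program witnessing $K(y\,|\,x)$ can be used after simply projecting the pair onto its first coordinate.

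More concretely, I would proceed in two steps. First, I would invoke Lemma \ref{fact1} to obtain a constant $c_1 \in \mathbb{N}$ such that
\[
K(x,y) = K(x) + K(y\,|\,\langle x, K(x) \rangle) + c_1.
\]
Second, I would construct a Turing machine $N$ that, on input $(\pi, \langle x, K(x) \rangle)$, extracts the first coordinate $x$ from the pair and then simulates $U(\pi, x)$. If $\pi$ is a minimum-length program witnessing $K(y\,|\,x)$, then $N(\pi, \langle x, K(x)\rangle) = y$, and optimality of $U$ yields a constant $c_N$ with
\[
K(y\,|\,\langle x, K(x)\rangle) \leq K_N(y\,|\,\langle x, K(x)\rangle) + c_N \leq K(y\,|\,x) + c_N.
\]
Chaining these two inequalities and setting $c = c_1 + c_N$ gives the desired bound.

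There is essentially no obstacle here; the only point requiring a bit of care is making sure the auxiliary machine $N$ is properly self-delimiting and that the projection of $\langle x, K(x)\rangle$ onto $x$ is uniformly computable, so that the constant $c_N$ from optimality genuinely does not depend on $x$ or $y$. Note that Lemma \ref{yxlemma} is \emph{not} needed here: it provides the reverse direction $K(y\,|\,x) \leq K(y\,|\,\langle x, K(x)\rangle) + K(K(x)) + O(1)$, whereas the inequality we require is the trivial direction in which additional side information cannot hurt.
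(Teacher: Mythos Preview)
Your proposal is correct and matches the paper's intended derivation: the paper simply records this as an immediate corollary of Lemma~\ref{fact1}, and the only missing step is exactly the observation you supply, namely that $K(y\,|\,\langle x,K(x)\rangle) \leq K(y\,|\,x) + O(1)$ because the extra side information $K(x)$ can be ignored. Your care about uniformity of the projection and the self-delimiting machine is appropriate, and your remark that Lemma~\ref{yxlemma} is not needed here is also correct.
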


\begin{lemma}\label{helping lemma}
For all $x,y \in \{0,1\}^*$,
\begin{align*}
K(y\,|\,x) + K(x) \leq K(x,y) + o(|x|) \text{ as } |x| \rightarrow \infty.
\end{align*}
\end{lemma}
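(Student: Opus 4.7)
The plan is to chain together the three immediately preceding lemmas. From Lemma \ref{fact1}, rearranged, we have $K(y\,|\,x,K(x)) = K(x,y) - K(x) - c$ for some constant $c$. From Lemma \ref{yxlemma}, $K(y\,|\,x) \leq K(y\,|\,\langle x,K(x)\rangle) + K(K(x)) + c'$. Substituting the first identity into the right-hand side of the second inequality and moving $K(x)$ to the left yields
\[
K(y\,|\,x) + K(x) \leq K(x,y) + K(K(x)) + O(1).
\]
Finally, Lemma \ref{olemma} tells us $K(K(x)) = o(|x|)$ as $|x| \to \infty$, which absorbs the $O(1)$ term and gives the desired bound.

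The only mild subtlety is making sure the constants from the various lemmas compose cleanly (they do, since each lemma supplies a single additive constant independent of $x$ and $y$), and confirming that the $o(|x|)$ term is genuinely controlled by $K(K(x))$ rather than by anything depending on $|y|$. Since the remainder is exactly $K(K(x)) + O(1)$, which depends only on $|x|$, this is immediate. There is no real obstacle here: the lemma is essentially a bookkeeping consequence of Lemma \ref{fact1}, with Lemma \ref{yxlemma} used to strip the extra $K(x)$-conditioning at the cost of a sublinear error term.
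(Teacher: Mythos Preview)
Your proposal is correct and takes essentially the same approach as the paper: both chain together Lemma~\ref{yxlemma}, Lemma~\ref{fact1}, and Lemma~\ref{olemma} to arrive at $K(y\,|\,x) + K(x) \leq K(x,y) + K(K(x)) + O(1) = K(x,y) + o(|x|)$. The only difference is cosmetic ordering---you invoke Lemma~\ref{fact1} first and substitute, while the paper starts from Lemma~\ref{yxlemma} and adds $K(x)$ to both sides before applying Lemma~\ref{fact1}.
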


\begin{proof}
By Lemma \ref{yxlemma}, there is a constant $c_1 \in \mathbb{N}$ such that
\[
K(y\,|\,x) \leq K(y\,|\,\langle x,K(x) \rangle) + K(K(x)) + c_1.
\]
This implies that
\[
K(y\,|\,x) + K(x) \leq K(y\,|\,\langle x,K(x) \rangle) + K(K(x)) + K(x) + c_1.
\]
By Lemma \ref{fact1}, there is a constant $c_2 \in \mathbb{N}$ such that
\[
K(y\,|\,x) + K(x) \leq K(x,y) + K(K(x)) + c_1 + c_2.
\]
Therefore, by Lemma \ref{olemma},
\[
K(y\,|\,x) + K(x) \leq K(x,y) + o(|x|).
\]
as $|x| \rightarrow \infty$. \qedhere
\end{proof}

The rest of this section is about mutual information and mutual dimension. We now provide the definitions of the mutual information between strings as defined in \cite{bLiVit08} and the mutual dimension between sequences.

\begin{definition}
The (\emph{algorithmic}) \emph{mutual information} between $u \in \Sigma^*$ and $w \in \Sigma^*$ is
\[
I(u:w) = K(w) - K(w\,|\,u).
\]
\end{definition}

\begin{definition}
The \emph{lower} and \emph{upper mutual dimensions} between $S \in \Sigma^{\infty}$ and $T \in \Sigma^{\infty}$ are
\begin{align*}
mdim(S:T) = \displaystyle\liminf\limits_{(u,w) \rightarrow (S, T)}\frac{I(u:w)}{|u| \log |\Sigma|}
\end{align*}
and
\begin{align*}
Mdim(S:T) = \displaystyle\limsup\limits_{(u,w) \rightarrow (S, T)}\frac{I(u:w)}{|u| \log |\Sigma|},
\end{align*}
respectively.
\end{definition}
(We insist that $|u| = |w|$ in the above limits.) The mutual dimension between two sequences is regarded as the density of algorithmic mutual information between them.

\begin{lemma}\label{x:y}
For all strings $x,y \in \{0,1\}^*$,
\begin{align*}
I(x:y) = K(x) + K(y) - K(x,y) + o(|x|).
\end{align*}
\end{lemma}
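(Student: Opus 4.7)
The plan is to reduce the claim to the symmetry-of-information identity $K(y\,|\,x) + K(x) = K(x,y) + o(|x|)$, which is already almost in hand from the preceding lemmas. Unfolding the definition, $I(x:y) = K(y) - K(y\,|\,x)$, so the assertion
\[
K(y) - K(y\,|\,x) = K(x) + K(y) - K(x,y) + o(|x|)
\]
is equivalent to
\[
K(y\,|\,x) + K(x) = K(x,y) + o(|x|).
\]
So my proof will establish exactly this two-sided estimate and then rearrange.

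For the upper bound $K(x,y) \leq K(y\,|\,x) + K(x) + O(1)$, I would simply invoke Corollary \ref{fact2}, which is stated with a constant error term (certainly $o(|x|)$). For the lower bound $K(y\,|\,x) + K(x) \leq K(x,y) + o(|x|)$, I would invoke Lemma \ref{helping lemma} directly; that lemma packages the chain-rule identity of Lemma \ref{fact1} together with the $K(K(x)) = o(|x|)$ slack from Lemma \ref{olemma}, so no further work is needed.

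Combining the two inequalities gives $K(y\,|\,x) + K(x) = K(x,y) + o(|x|)$ as $|x|\to\infty$. I would then write
\[
I(x:y) = K(y) - K(y\,|\,x) = K(y) - \bigl(K(x,y) - K(x) + o(|x|)\bigr) = K(x) + K(y) - K(x,y) + o(|x|),
\]
completing the proof. There is no real obstacle here: the content of the lemma is entirely absorbed into the two previously proved estimates, and the argument is a one-line rearrangement of the definition of $I(x:y)$ combined with the symmetry-of-information equality. The only mild subtlety worth flagging is that the error term is asymmetric in form (a constant on one side, $o(|x|)$ on the other), but since $O(1) \subseteq o(|x|)$ this is harmless for the final statement.
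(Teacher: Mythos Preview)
Your proposal is correct and essentially mirrors the paper's own proof: both split into two inequalities, invoking Lemma~\ref{helping lemma} for one direction and Corollary~\ref{fact2} for the other, then combine with the definition $I(x:y)=K(y)-K(y\,|\,x)$. The only cosmetic difference is that you first isolate the identity $K(y\,|\,x)+K(x)=K(x,y)+o(|x|)$ and then rearrange, whereas the paper applies each inequality directly to $I(x:y)$; the content is identical.
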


\begin{proof}
By definition of mutual information and Lemma \ref{helping lemma},
\begin{align*}
I(x:y) &= K(y) - K(y\,|\,x)\\
			 &\geq K(x) + K(y) - K(x,y) + o(|x|).
\end{align*}
as $|x| \rightarrow \infty$. Also, by Corollary \ref{fact2}, there is a constant $c \in \mathbb{N}$ such that
\begin{align*}
I(x:y) &= K(y) - K(y\,|\,x)\\
			 &\leq K(x) + K(y) - K(x,y) + c\\
			 &= K(x) + K(y) - K(x,y) + o(|x|).
\end{align*}
as $|x| \rightarrow \infty$. \qedhere
\end{proof}

The next two definitions were proposed and thoroughly investigated in \cite{CasLut15}.

\begin{definition}
The \emph{mutual information} between $x \in \mathbb{R}^n$ and $y \in \mathbb{R}^m$ at $\emph{precision}$ $r \in \mathbb{N}$ is
\[
I_r(x:y) = \min \{I(q:p)\,|\, q \in B_{2^{-r}}(x) \cap \mathbb{Q}^n \text{ and } p \in B_{2^{-r}}(y) \cap \mathbb{Q}^m\}. 
\]
\end{definition}

\begin{definition}
The \emph{lower} and \emph{upper mutual dimensions} between $x \in \mathbb{R}^n$ and $y \in \mathbb{R}^m$ are
\begin{align*}
mdim(x:y) = \displaystyle\liminf\limits_{r \rightarrow \infty}\frac{I_r(x:y)}{r}
\end{align*}
and
\begin{align*}
Mdim(x:y) = \displaystyle\limsup\limits_{r \rightarrow \infty}\frac{I_r(x:y)}{r}
\end{align*}
\end{definition}

\begin{lemma}\label{correspond}
For all $S,T \in \Sigma^{\infty}$ and $r \in \mathbb{N}$,
\begin{align*}
I(S \upharpoonright r:T\upharpoonright r) = I_r(\alpha_S:\alpha_T) + o(r).
\end{align*}
\end{lemma}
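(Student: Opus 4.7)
The plan is to reduce the claim to three ingredients that have already been proved (or are classical): the symmetry-of-information identity for strings (Lemma \ref{x:y}), the string/real conversion lemmas (Lemma \ref{KolEquiv} and Lemma \ref{joint}), and the analogous symmetry-of-information identity for points in Euclidean space, namely
\[
I_r(x:y) = K_r(x) + K_r(y) - K_r(x,y) + o(r),
\]
which is established in \cite{CasLut15}. Once these three pieces are in hand, the proof is essentially a chain of substitutions.

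First, I would apply Lemma \ref{x:y} with $x = S\upharpoonright r$ and $y = T\upharpoonright r$ (noting $|x|=r$) to get
\[
I(S\upharpoonright r : T\upharpoonright r) = K(S\upharpoonright r) + K(T\upharpoonright r) - K(S\upharpoonright r, T\upharpoonright r) + o(r).
\]
Since the pair $(S\upharpoonright r, T\upharpoonright r)$ and the interleaved string $(S,T)\upharpoonright r \in (\Sigma\times\Sigma)^*$ are computably convertible into one another, the joint complexities agree up to an additive constant, so this becomes
\[
I(S\upharpoonright r : T\upharpoonright r) = K(S\upharpoonright r) + K(T\upharpoonright r) - K((S,T)\upharpoonright r) + o(r).
\]

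Next I would invoke Lemma \ref{KolEquiv} twice (once for $S$, once for $T$) and Lemma \ref{joint} to rewrite each Kolmogorov term on the right as a real-valued complexity:
\[
I(S\upharpoonright r : T\upharpoonright r) = K_r(\alpha_S) + K_r(\alpha_T) - K_r(\alpha_S,\alpha_T) + o(r).
\]
Each substitution introduces an $o(r)$ error, but absorbing finitely many $o(r)$ terms is harmless.

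Finally, applying the Euclidean symmetry-of-information identity from \cite{CasLut15} in the reverse direction with $x = \alpha_S$ and $y = \alpha_T$ replaces the right-hand side with $I_r(\alpha_S:\alpha_T) + o(r)$, yielding the claim. The only point that requires real care is making sure the Euclidean symmetry identity is available in the precise form needed; if it has not been explicitly stated earlier, I would prove it in parallel with Lemma \ref{x:y}, using the same general outline (an upper bound from a version of Corollary \ref{fact2} relativized to precision $r$, and a lower bound from a precision-$r$ analog of Lemma \ref{helping lemma}), with the conversion between $K_r$ and $K$ on rational approximations contributing $o(r)$ terms. This is the only step I expect to require any real work; everything else is bookkeeping.
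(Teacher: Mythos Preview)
Your proposal is correct and matches the paper's proof essentially line-for-line: the paper applies Lemma~\ref{x:y} to expand $I(S\upharpoonright r:T\upharpoonright r)$, then Lemmas~\ref{KolEquiv} and~\ref{joint} to convert each term to $K_r$-form, and finally the Euclidean symmetry-of-information identity from~\cite{CasLut15} to collapse the sum to $I_r(\alpha_S:\alpha_T)+o(r)$. You are, if anything, slightly more explicit than the paper in flagging that the last step needs the Euclidean identity from~\cite{CasLut15} (the paper just writes the equality without a separate citation at that point), but this is the same argument.
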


\begin{proof}
By Lemmas \ref{KolEquiv}, \ref{joint}, and \ref{x:y},
\begin{align*}
I(S\upharpoonright r:T\upharpoonright r) &= K(S\upharpoonright r) + K(T\upharpoonright r) - K((S, T)\upharpoonright r) + o(r)\\
																				 &= K_r(\alpha_S) + K_r(\alpha_T) - K_r(\alpha_S,\alpha_T) + o(r)\\
																				 &= I_r(\alpha_S:\alpha_T) + o(r).
\end{align*}
as $r \rightarrow \infty$. \qedhere
\end{proof}

The following corollary follows immediately from Lemma \ref{correspond} and relates the mutual dimension between sequences to the mutual dimension between the sequences' real representations.

\begin{corollary}\label{mdimEquiv}

For all $S,T \in \Sigma^{\infty}$,
\begin{align*}
mdim(S:T) = mdim(\alpha_S:\alpha_T)\,\,\, \text{and} \,\,\, MDim(S:T) = Mdim(\alpha_S:\alpha_T).
\end{align*}
\end{corollary}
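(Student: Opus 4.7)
The plan is to derive Corollary \ref{mdimEquiv} directly from Lemma \ref{correspond} by taking the $\liminf$ (resp.\ $\limsup$) of both sides and observing that an additive $o(r)$ error in the numerator is absorbed by the limit.

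First, I would unpack the definition of the lower mutual dimension. Because the condition $(u,w) \to (S,T)$ with $|u|=|w|$ on $\Sigma^{\infty}\times\Sigma^{\infty}$ amounts to letting $r := |u| = |w| \to \infty$ with $u = S\upharpoonright r$ and $w = T\upharpoonright r$, we have
\[
mdim(S:T) = \liminf_{r \to \infty} \frac{I(S\upharpoonright r : T\upharpoonright r)}{r \log |\Sigma|}.
\]
Substituting the identity $I(S\upharpoonright r : T\upharpoonright r) = I_r(\alpha_S : \alpha_T) + o(r)$ from Lemma \ref{correspond} into the numerator and using that $o(r)/(r \log |\Sigma|) \to 0$, I would conclude that this liminf equals the liminf of $I_r(\alpha_S : \alpha_T)$ divided by the same normalization, which is exactly $mdim(\alpha_S : \alpha_T)$ under the correspondence already used to pass from Lemma \ref{KolEquiv} to Corollary \ref{dimEquiv}.

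The upper version $Mdim(S:T) = Mdim(\alpha_S : \alpha_T)$ follows by the identical argument with $\limsup$ in place of $\liminf$; the $o(r)$ term again contributes nothing in the limit. There is no real obstacle to the proof: all the substantive work is in Lemma \ref{correspond}, and the manipulation of limits to absorb an $o(r)$ term into a ratio whose denominator is linear in $r$ is routine. The only care needed is to verify that $(u,w)\to(S,T)$ with equal-length strings can indeed be parameterized by prefix length $r$, which is immediate from the topology on $\Sigma^{\infty}$.
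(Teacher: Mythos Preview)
Your proposal is correct and follows exactly the route the paper takes: the paper simply states that the corollary ``follows immediately from Lemma \ref{correspond},'' and your write-up is a faithful unpacking of that sentence, taking $\liminf$/$\limsup$ and absorbing the $o(r)$ term. Your deference of the normalization issue to the same passage from Lemma \ref{KolEquiv} to Corollary \ref{dimEquiv} is precisely how the paper handles it as well.
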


Our main result for this section shows that mutual dimension between sequences is well behaved.

\begin{theorem}\label{mdim props}
For all $S,T \in \Sigma^{\infty}$,
\begin{enumerate}
\item $dim(S) + dim(T) - Dim(S,T) \leq mdim(S:T) \leq Dim(S) + Dim(T) - Dim(S,T)$.
\item $dim(S) + dim(T) - dim(S,T) \leq Mdim(S:T) \leq Dim(S) + Dim(T) - dim(S,T)$.
\item $mdim(S:T) \leq \min\{dim(S),dim(T)\}$; $Mdim(S:T) \leq \min\{Dim(S),Dim(T)\}$.
\item $0 \leq mdim(S:T) \leq Mdim(S:T) \leq 1$.
\item $mdim(S:T) = mdim(T:S)$; $Mdim(S:T) = Mdim(T:S)$.
\end{enumerate}
\end{theorem}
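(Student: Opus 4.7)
The plan is to reduce the whole theorem to the single identity
$$I(S \upharpoonright r : T \upharpoonright r) = K(S \upharpoonright r) + K(T \upharpoonright r) - K((S,T)\upharpoonright r) + o(r),$$
which is Lemma \ref{x:y} applied with $x = S \upharpoonright r$ and $y = T \upharpoonright r$. Dividing through by $r \log |\Sigma|$, I abbreviate the three normalized Kolmogorov complexities as $a_r, b_r, c_r$. Then $dim(S), Dim(S)$ are $\liminf_r a_r, \limsup_r a_r$; similarly for $b_r$ and $T$, and for $c_r$ and the joint sequence $(S,T)$; meanwhile $mdim(S:T)$ and $Mdim(S:T)$ are exactly $\liminf_r(a_r+b_r-c_r)$ and $\limsup_r(a_r+b_r-c_r)$.

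With this dictionary in hand, Parts 1 and 2 reduce to the calculus of $\liminf$ and $\limsup$. The ``easy'' directions $mdim(S:T) \geq dim(S)+dim(T)-Dim(S,T)$ and $Mdim(S:T) \leq Dim(S)+Dim(T)-dim(S,T)$ follow immediately from $\liminf(x_r+y_r) \geq \liminf x_r + \liminf y_r$ together with $\liminf(-c_r) = -\limsup c_r$, and the dual inequality for $\limsup$. The tighter directions, where $Dim(S,T)$ appears on the right of an upper bound for $mdim$ and where $dim(S,T)$ appears on the right of a lower bound for $Mdim$, need a short subsequence argument: extract $(r_k)$ along which $c_{r_k} \to Dim(S,T)$ (respectively $dim(S,T)$), then pass to further subsequences so that $a_{r_k}$ and $b_{r_k}$ converge to some $a^* \leq Dim(S)$ and $b^* \leq Dim(T)$ (respectively $a^* \geq dim(S)$, $b^* \geq dim(T)$). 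The subsequential limit $a^* + b^* - \lim c_{r_k}$ then witnesses the required bound on $\liminf$ or $\limsup$ of $a_r + b_r - c_r$.

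Part 3 follows by combining the well-known $K(u,w) \geq \max\{K(u), K(w)\} - O(1)$ with Lemma \ref{x:y} to get $I(u:w) \leq \min\{K(u), K(w)\} + o(|u|)$; applying $\liminf$ and $\limsup$ with $u = S \upharpoonright r$, $w = T \upharpoonright r$ then gives both assertions, using that $\liminf$ and $\limsup$ satisfy $\liminf \min\{f,g\} \leq \min\{\liminf f, \liminf g\}$ (and similarly for $\limsup$). Part 4 splits into three pieces: $mdim(S:T) \leq Mdim(S:T)$ is automatic from $\liminf \leq \limsup$; $mdim(S:T) \geq 0$ follows from Corollary \ref{fact2}, which gives $K(u,w) \leq K(u)+K(w)+O(1)$ and hence $I(u:w) \geq -O(1)+o(|u|)$ via Lemma \ref{x:y}; and $Mdim(S:T) \leq 1$ follows from Part 3 together with the standard bound $Dim(S) \leq 1$. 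Part 5 is immediate from Lemma \ref{x:y}, since the expression $K(u)+K(w)-K(u,w)$ is manifestly symmetric in $u$ and $w$ up to $K(u,w)=K(w,u)+O(1)$. I do not anticipate any serious obstacle; the only step needing real care is the subsequence extraction used to obtain the tight bounds in Parts 1 and 2.
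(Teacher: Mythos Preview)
Your proposal is correct but takes a genuinely different route from the paper. The paper does not argue directly at all: it invokes the correspondences $mdim(S:T)=mdim(\alpha_S:\alpha_T)$, $dim(S)=dim(\alpha_S)$, $dim(S,T)=dim(\alpha_S,\alpha_T)$ (Corollaries~\ref{jointDimEquiv}, \ref{dimEquiv}, \ref{mdimEquiv}) to transfer every statement to points in Euclidean space, and then simply cites the analogous properties of $mdim$ and $Mdim$ in~$\mathbb{R}^n$ already established in~\cite{CasLut15}. Your argument, by contrast, stays entirely in~$\Sigma^\infty$ and rests on Lemma~\ref{x:y} plus the $\liminf/\limsup$ calculus and the two Kolmogorov-complexity inequalities $K(u,w)\geq\max\{K(u),K(w)\}-O(1)$ and $K(u,w)\leq K(u)+K(w)+O(1)$. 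This is more self-contained and avoids the detour through Lemmas~\ref{joint}, \ref{KolEquiv}, \ref{correspond} and the external reference; the paper's route buys brevity at the cost of importing a black box. The subsequence extraction you flag as the delicate step for the tight halves of Parts~1 and~2 is indeed the right idea and works because $a_r,b_r\in[0,1+o(1)]$ are bounded; no issues there. One cosmetic point: Lemma~\ref{x:y} is stated for $x,y\in\{0,1\}^*$, so strictly speaking you should note that the same argument goes through verbatim over~$\Sigma^*$ (or that an $O(1)$-cost encoding into binary suffices).
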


\begin{proof}
The theorem follows directly from the properties of mutual dimension between points in Euclidean space found in \cite{CasLut15} and the correspondences described in corollaries \ref{jointDimEquiv}, \ref{dimEquiv}, and \ref{mdimEquiv}. \qedhere
\end{proof}
\section{Mutual Dimension and Coupled Randomness}
In this section we investigate the mutual dimensions between coupled random sequences. Because coupled randomness is new to algorithmic information theory, we first review the technical framework for it. Let $\Sigma$ be a finite alphabet. A (\emph{Borel}) \emph{probability measure} on the Cantor space $\Sigma^{\infty}$ of all infinite sequences over $\Sigma$ is (conveniently represented by) a function $\nu: \Sigma^* \rightarrow [0,1]$ with the following two properties.
\begin{enumerate}
\item $\nu(\lambda) = 1$, where $\lambda$ is the empty string.
\vspace*{2mm}
\item For every $w \in \Sigma^*$, $\nu(w) = \displaystyle\sum\limits_{a \in \Sigma} \nu(wa)$.
\end{enumerate}
Intuitively, here, $\nu(w)$ is the probability that $w \sqsubseteq S$ ($w$ is a \emph{prefix} of $S$) when $S \in \Sigma^{\infty}$ is ``chosen according to'' the probability measure $\nu$.

Most of this paper concerns a very special class of probability measures on $\Sigma^{\infty}$. For each $n \in \mathbb{N}$, let $\alpha^{(n)}$ be a probability measure on $\Sigma$, i.e., $\alpha^{(n)}: \Sigma \rightarrow [0,1]$, with 
\[
\displaystyle\sum\limits_{a \in \Sigma}\alpha^{(n)}(a) = 1,
\]
and let $\vec{\alpha} = (\alpha^{(0)}, \alpha^{(1)}, \ldots)$ be the sequence of these probability measures on $\Sigma$. Then the \emph{product} of $\vec{\alpha}$ (or, emphatically distinguishing it from the products $\nu_1 \times \nu_2$ below, the \emph{longitudinal product} of $\vec{\alpha}$) is the probability measure $\mu[\vec{\alpha}]$ on $\Sigma^{\infty}$ defined by
\[
\mu[\vec{\alpha}](w) = \displaystyle\prod\limits_{n=0}^{|w|-1}\alpha^{(n)}(w[n])
\]
for all $w \in \Sigma^*$, where $w[n]$ is the $n^{th}$ symbol in $w$. Intuitively, a sequence $S \in \Sigma^{\infty}$ is ``chosen according to'' $\mu[\vec{\alpha}]$ by performing the successive experiments $\alpha^{(0)}, \alpha^{(1)}, \ldots$ \emph{independently}.

To extend probability to pairs of sequences, we regard $\Sigma \times \Sigma$ as an alphabet and rely on the natural identification between $\Sigma^{\infty} \times \Sigma^{\infty}$ and $(\Sigma \times \Sigma)^{\infty}$. A probability measure on $\Sigma^{\infty} \times \Sigma^{\infty}$ is thus a function $\nu: (\Sigma \times \Sigma)^* \rightarrow [0,1]$. It is convenient to write elements of $(\Sigma \times \Sigma)^*$ as ordered pairs $(u,v)$, where $u,v \in \Sigma^*$ \emph{have the same length}. With this notation, condition 2 above says that, for every $(u,v) \in (\Sigma \times \Sigma)^*$,
\[
\nu(u,v) = \displaystyle\sum\limits_{a,b \in \Sigma}\nu(ua,vb).
\]

If $\nu$ is a probability measure on $\Sigma^{\infty} \times \Sigma^{\infty}$, then the first and second \emph{marginal probability measures} of $\nu$ (briefly, the first and second \emph{marginals} of $\nu$) are the functions $\nu_1,\nu_2: \Sigma^* \rightarrow [0,1]$ defined by
\[
\nu_1(u) = \displaystyle\sum\limits_{v \in \Sigma^{|u|}}\nu(u,v),\,\,\,\nu_2(v) = \displaystyle\sum\limits_{u \in \Sigma^{|v|}}\nu(u,v).
\]
It is easy to verify that $\nu_1$ and $\nu_2$ are probability measures on $\Sigma^*$. The probability measure $\nu$ here is often called a \emph{joint probability measure} on $\Sigma^{\infty} \times \Sigma^{\infty}$, or a \emph{coupling} of the probability measures $\nu_1$ and $\nu_2$.

If $\nu_1$ and $\nu_2$ are probability measures on $\Sigma^{\infty}$, then the \emph{product probability measure} $\nu_1 \times \nu_2$ on $\Sigma^{\infty} \times \Sigma^{\infty}$ is defined by
\[
(\nu_1 \times \nu_2)(u,v) = \nu_1(u)\nu_2(v)
\]
for all $u,v \in \Sigma^*$ with $|u| = |v|$. It is well known and easy to see that $\nu_1 \times \nu_2$ is, indeed a probability measure on $\Sigma^{\infty} \times \Sigma^{\infty}$ and that the marginals of $\nu_1 \times \nu_2$ are $\nu_1$ and $\nu_2$. Intuitively, $\nu_1 \times \nu_2$ is the coupling of $\nu_1$ and $\nu_2$ in which $\nu_1$ and $\nu_2$ are \emph{independent}, or \emph{uncoupled}.

We are most concerned here with coupled longitudinal product probability measures on $\Sigma^{\infty} \times \Sigma^{\infty}$. For each $n \in \mathbb{N}$, let $\alpha^{(n)}$ be a probability measure on $\Sigma \times \Sigma$, i.e., $\alpha^{(n)}: \Sigma \times \Sigma \rightarrow [0,1]$, with
\[
\displaystyle\sum\limits_{a,b \in \Sigma}\alpha^{(n)}(a,b) = 1,
\]
and let $\vec{\alpha} = (\alpha^{(0)}, \alpha^{(1)}, \ldots)$ be the sequence of these probability measures. Then the longitudinal product $\mu[\vec{\alpha}]$ is defined as above, but now treating $\Sigma \times \Sigma$ as the alphabet. It is easy to see that the marginals of $\mu[\vec{\alpha}]$ are $\mu[\vec{\alpha}]_1 = \mu[\vec{\alpha_1}]$ and $\mu[\vec{\alpha}]_2 = \mu[\vec{\alpha_2}]$, where each $\alpha_i^{(n)}$ is the marginal on $\Sigma$ given by
\[
\alpha_1^{(n)}(a) = \displaystyle\sum\limits_{b \in \Sigma}\alpha^{(n)}(a,b),\,\,\,\alpha_2^{(n)}(b) = \displaystyle\sum\limits_{a \in \Sigma}\alpha^{(n)}(a,b).
\]

The following class of examples is useful \cite{ODon14} and instructive.

\begin{expl}\label{murho} \normalfont{Let $\Sigma = \{0,1\}$. For each $n \in \mathbb{N}$, fix a real number $\rho_n \in [-1,1]$, and define the probability measure $\alpha^{(n)}$ on $\Sigma \times \Sigma$ by $\alpha^{(n)}(0,0) = \alpha^{(n)}(1,1) = \frac{1+\rho_n}{4}$ and $\alpha^{(n)}(0,1) = \alpha^{(n)}(1,0) = \frac{1-\rho_n}{4}$. Then, writing $\alpha^{\vec{\rho}}$ for $\vec{\alpha}$, the longitudinal product $\mu[\alpha^{\vec{\rho}}]$ is a probability measure on $\bf{C} \times \bf{C}$. It is routine to check that the marginals of $\mu[\alpha^{\vec{\rho}}]$ are
\[
\mu[\alpha^{\vec{\rho}}]_1 = \mu[\alpha^{\vec{\rho}}]_2 = \mu,
\]
where $\mu(w) = 2^{-|w|}$ is the uniform probability measure on \bf{C}.}
\end{expl}

It is convenient here to use Schnorr's martingale characterization \cite{Schn71a,Schn71b,Schn77,bLiVit08,Nies09,DowHir10} of the algorithmic randomness notion introduced by Martin-L{\"o}f \cite{Mart66}. If $\nu$ is a probability measure on $\Sigma^{\infty}$, then a $\nu$--\emph{martingale} is a function $d: \Sigma^* \rightarrow [0, \infty)$ satisfying $d(w)\nu(w) = \sum_{a \in \Sigma}d(wa)\nu(wa)$ for all $w \in \Sigma^*$. A $\nu$--martingale $d$ \emph{succeeds} on a sequence $S \in \Sigma^{\infty}$ if $\limsup_{w \rightarrow S}d(w) = \infty$. A $\nu$--martingale $d$ is \emph{constructive}, or \emph{lower semicomputable}, if there is a computable function $\hat{d}: \Sigma^* \times \mathbb{N} \rightarrow \mathbb{Q} \cap [0, \infty]$ such that $\hat{d}(w,t) \leq \hat{d}(w,t+1)$ holds for all $w \in \Sigma^*$ and $t \in \mathbb{N}$, and $\lim_{t \rightarrow \infty}\hat{d}(w,t) = d(w)$ holds for all $w \in \Sigma^*$. A sequence $R \in \Sigma^{\infty}$ is \emph{random} with respect to a probability measure $\nu$ on $\Sigma^*$ if no lower semicomputable $\nu$--martingale succeeds on $R$.

If we once again treat $\Sigma \times \Sigma$ as an alphabet, then the above notions all extend naturally to $\Sigma^{\infty} \times \Sigma^{\infty}$. Hence, when we speak of a \emph{coupled pair} $(R_1,R_2)$ \emph{of random sequences}, we are referring to a pair $(R_1,R_2) \in \Sigma^{\infty} \times \Sigma^{\infty}$ that is random with respect to some probability measure $\nu$ on $\Sigma^{\infty} \times \Sigma^{\infty}$ that is explicit or implicit in the discussion. An extensively studied special case here is that $R_1, R_2 \in \Sigma^{\infty}$ are defined to be \emph{independently random} with respect to probability measures $\nu_1, \nu_2$, respectively, on $\Sigma^{\infty}$ if $(R_1,R_2)$ is random with respect to the product probability measure $\nu_1 \times \nu_2$ on $\Sigma^{\infty} \times \Sigma^{\infty}$.

When there is no possibility of confusion, we use such convenient abbreviations as ``random with respect to $\vec{\alpha}$'' for ``random with respect to $\mu[\vec{\alpha}]$.''

A trivial transformation of Martin-L{\"o}f tests establishes the following well known fact.

\begin{observation}\label{rand marg}
If $\nu$ is a computable probability measure on $\Sigma^{\infty} \times \Sigma^{\infty}$ and $(R_1, R_2) \in \Sigma^{\infty} \times \Sigma^{\infty}$ is random with respect to $\nu$, then $R_1$ and $R_2$ are random with respect to the  marginals $\nu_1$ and $\nu_2$.
\end{observation}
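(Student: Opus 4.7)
The plan is to prove the contrapositive: if $R_1$ is not random with respect to $\nu_1$, then $(R_1,R_2)$ is not random with respect to $\nu$; the argument for $R_2$ is symmetric. By the classical equivalence between Schnorr's martingale characterization and Martin-L\"of's original definition of randomness via tests, I may freely work with Martin-L\"of tests, which lift cleanly across marginals.

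First I would verify that $\nu_1$ is itself computable, since for each $u \in \Sigma^*$ the value $\nu_1(u) = \sum_{v \in \Sigma^{|u|}} \nu(u,v)$ is a finite sum of computable values. This ensures that the notion of a $\nu_1$-Martin-L\"of test is well-defined. Assuming $R_1$ is not $\nu_1$-random, I fix such a test $(U_n)_{n \in \mathbb{N}}$ with $R_1 \in \bigcap_n U_n$ and $\nu_1(U_n) \leq 2^{-n}$, and define the lifted test $V_n := U_n \times \Sigma^{\infty}$, regarded as a subset of $\Sigma^{\infty} \times \Sigma^{\infty} = (\Sigma \times \Sigma)^{\infty}$.

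The verification that $(V_n)$ is a $\nu$-Martin-L\"of test captured by $(R_1,R_2)$ then reduces to three routine checks: (i) uniform c.e.-ness, because if each $U_n = \bigcup_{u \in A_n}[u]$ for a uniformly c.e. family $(A_n)$ of strings, then $V_n = \bigcup_{u \in A_n}\bigcup_{v \in \Sigma^{|u|}}[(u,v)]$ is a uniformly c.e. union of basic cylinders in $(\Sigma \times \Sigma)^{\infty}$; (ii) the measure bound $\nu(V_n) = \nu_1(U_n) \leq 2^{-n}$, which follows directly from the definition of the first marginal; and (iii) $(R_1,R_2) \in V_n$, since $R_1 \in U_n$. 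These three together contradict the $\nu$-randomness of $(R_1,R_2)$, finishing the contrapositive.

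The main obstacle is conceptual rather than computational. It is tempting to argue directly in the paper's martingale language by defining $d(u,v) := d_1(u)$ for a lower semicomputable $\nu_1$-martingale $d_1$ that succeeds on $R_1$, but this $d$ is not in general a $\nu$-martingale: under $\nu$, the conditional probability of the next symbol of $S$ given the past $(u,v)$ may genuinely depend on $v$, while $d_1$ is calibrated only against the unconditional $\nu_1$-conditional $\nu_1(ua)/\nu_1(u)$. Working at the level of open sets sidesteps this mismatch entirely, which is why the paper can describe the argument as a ``trivial transformation of Martin-L\"of tests''.
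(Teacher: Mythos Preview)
Your proof is correct and is precisely the ``trivial transformation of Martin-L\"of tests'' the paper alludes to in lieu of a proof; the paper gives no further argument, so there is nothing to compare beyond that one-line remark. Your final paragraph explaining why the na\"ive martingale lift $d(u,v)=d_1(u)$ fails is a nice bonus that the paper does not mention.
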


\begin{expl}
\normalfont{If $\vec{\rho}$ is a computable sequence of reals $\rho_n \in [-1,1]$, $\alpha^{\vec{\rho}}$ is as in Example \ref{murho}, and $(R_1,R_2) \in \bf{C} \times \bf{C}$ is random with respect to $\alpha^{\vec{\rho}}$, then Observation \ref{rand marg} tells us that $R_1$ and $R_2$ are random with respect to the uniform probability measure on $\bf{C}$.}
\end{expl}

We recall basic definitions from Shannon information theory.

\begin{definition}
Let $\alpha$ be a probability measure on $\Sigma$. The \emph{Shannon entropy} of $\alpha$ is
\[
\mathcal{H}(\alpha) = \displaystyle\sum\limits_{a \in \Sigma} \alpha(a)\log \frac{1}{\alpha(a)}.
\]
\end{definition}

\begin{definition}
Let $\alpha$ be probability measures on $\Sigma \times \Sigma$. The \emph{Shannon mutual information} between $\alpha_1$ and $\alpha_2$ is
\[
I(\alpha_1:\alpha_2) = \displaystyle\sum\limits_{(a,b) \in \Sigma \times \Sigma}\alpha(a,b)\log \frac{\alpha(a,b)}{\alpha_1(a)\alpha_2(b)}.
\]
\end{definition}

\begin{theorem}[\cite{jLutz03a}]\label{dim to H}
If $\vec{\alpha}$ is a computable sequence of probability measures $\alpha^{(n)}$ on $\Sigma$ that converge to a probability measure $\alpha$ on $\Sigma$, then for every $R \in \Sigma^{\infty}$ that is random with respect to $\vec{\alpha}$,
\[
dim(R) = \frac{\mathcal{H}(\alpha)}{\log |\Sigma|}
\]
\end{theorem}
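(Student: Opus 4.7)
The plan is to bridge the Kolmogorov-complexity definition of $dim(R)$ with the Shannon entropy $\mathcal{H}(\alpha)$ by routing through the probability measure $\mu[\vec{\alpha}]$. Since $\vec{\alpha}$ is a computable sequence of probability measures on the finite alphabet $\Sigma$, the longitudinal product $\mu[\vec{\alpha}]$ is itself a computable probability measure on $\Sigma^{\infty}$. I would first invoke the Levin-Schnorr theorem for prefix-free Kolmogorov complexity (together with the companion upper bound $K(w) \le \log(1/\nu(w)) + O(1)$ that holds for any computable probability measure $\nu$) to conclude that, for every $R$ random with respect to $\mu[\vec{\alpha}]$,
\[
K(R \upharpoonright n) \;=\; \log \frac{1}{\mu[\vec{\alpha}](R \upharpoonright n)} + O(1),
\]
where the $O(1)$ constant depends on $R$ and on the program for $\vec{\alpha}$ but not on $n$. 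This reduces the theorem to an asymptotic evaluation of $-\log \mu[\vec{\alpha}](R \upharpoonright n)/n$.

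Next, I would use the product structure: writing $X_i = \log(1/\alpha^{(i)}(R[i]))$, we have $-\log \mu[\vec{\alpha}](R \upharpoonright n) = \sum_{i=0}^{n-1} X_i$, and the $\mu[\vec{\alpha}]$-expectation of $X_i$ equals $\mathcal{H}(\alpha^{(i)})$. Because Shannon entropy is continuous on the simplex of probability measures on the finite set $\Sigma$ (with the convention $0\log(1/0) = 0$), the hypothesis $\alpha^{(i)} \to \alpha$ gives $\mathcal{H}(\alpha^{(i)}) \to \mathcal{H}(\alpha)$, and a routine Ces\`{a}ro argument yields
\[
\frac{1}{n}\sum_{i=0}^{n-1}\mathcal{H}(\alpha^{(i)}) \;\longrightarrow\; \mathcal{H}(\alpha).
\]

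The main obstacle is the strong law of large numbers for the $X_i$ along the single sequence $R$:
\[
\frac{1}{n}\sum_{i=0}^{n-1}\bigl(X_i - \mathcal{H}(\alpha^{(i)})\bigr) \;\longrightarrow\; 0.
\]
The plan here is to show that, for each rational $\epsilon > 0$, there is a lower semicomputable $\mu[\vec{\alpha}]$-martingale that succeeds on every sequence whose average deviation exceeds $\epsilon$ infinitely often. Such a martingale is built by exponentiating the centered variables and using Hoeffding/Chernoff-type estimates to check that the resulting exponential processes have uniformly bounded $\mu[\vec{\alpha}]$-expectation; summing these over a computable family of $\epsilon$ and tilt parameters produces a single constructive martingale. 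Since $R$ is $\mu[\vec{\alpha}]$-random no such martingale can succeed, which forces the average deviation to $0$. A subtlety I would need to address is that $X_i$ may be unbounded when $\alpha^{(i)}(a) \to 0$ for some $a$ with $\alpha(a) = 0$; this is handled by truncating $X_i$ at a slowly growing threshold and bounding the truncation error using a separate martingale argument that shows random $R$ cannot see these vanishing-probability symbols too often.

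Combining the three ingredients gives $K(R \upharpoonright n)/n \to \mathcal{H}(\alpha)$, so in particular
\[
dim(R) \;=\; \liminf_{n \to \infty} \frac{K(R \upharpoonright n)}{n \log |\Sigma|} \;=\; \frac{\mathcal{H}(\alpha)}{\log |\Sigma|},
\]
which is the claimed equality. As a bonus, because the underlying limit exists, the same argument yields $Dim(R) = \mathcal{H}(\alpha)/\log|\Sigma|$ as well.
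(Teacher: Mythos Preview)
The paper does not give its own proof of this theorem; it is quoted from \cite{jLutz03a} without argument, so there is nothing in-paper to compare your proposal against directly.

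On its own merits your strategy is sound: Levin--Schnorr pins $K(R\upharpoonright n)$ to $-\log\mu[\vec{\alpha}](R\upharpoonright n)$ up to a sublinear error, and an effective strong law of large numbers for the independent summands $X_i=\log\bigl(1/\alpha^{(i)}(R[i])\bigr)$ then pins the latter to $n\mathcal{H}(\alpha)$. Two technical points deserve tightening. First, the ``companion upper bound'' for prefix-free $K$ against a computable \emph{cylinder} measure $\nu$ is $K(w)\le-\log\nu(w)+K(|w|)+O(1)$, not $+O(1)$; the extra $O(\log|w|)$ is harmless after dividing by $|w|$, but the $O(1)$ claim as written is false. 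Second, your treatment of the unbounded-$X_i$ case (some $\alpha(a)=0$ with $\alpha^{(i)}(a)\to 0$) leans on Hoeffding/Chernoff martingales, which require bounded increments and so do not apply as stated. A cleaner route is to observe that $p\mapsto p(\log(1/p))^2$ is bounded on $(0,1]$, so $\operatorname{Var}(X_i)$ is uniformly bounded regardless of how $\alpha^{(i)}$ degenerates; Kolmogorov's criterion $\sum_i\operatorname{Var}(X_i)/i^{2}<\infty$ then yields the SLLN $\mu[\vec{\alpha}]$-almost surely, and the Chebyshev/Borel--Cantelli proof of that criterion effectivizes directly into a constructive $\mu[\vec{\alpha}]$-martingale, avoiding the truncation step altogether.

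For context, the argument in the cited source is framed in the gale/supergale language of constructive dimension rather than via the Levin--Schnorr characterization; your Kolmogorov-complexity route is a legitimate alternative, and both ultimately rest on the same large-deviations content.
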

The following is a corollary to Theorem \ref{dim to H}.

\begin{corollary}\label{K to H}
If $\vec{\alpha}$ is a computable sequence of probability measures $\alpha^{(n)}$ on $\Sigma$ that converge to a probability measure $\alpha$ on $\Sigma$, then for every $R \in \Sigma^{\infty}$ that is random with respect to $\vec{\alpha}$ and every $w \sqsubseteq R$,
\[
K(w) = |w|\mathcal{H}(\alpha) + o(|w|).
\]
\end{corollary}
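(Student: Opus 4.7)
The plan is to derive the corollary by strengthening Theorem~\ref{dim to H} to an additive $o(|w|)$ bound on $K(w)$ itself. Unwrapping the definition of $dim(R)$, Theorem~\ref{dim to H} is exactly the statement
\[
\liminf_{w\to R}\frac{K(w)}{|w|} = \mathcal{H}(\alpha),
\]
so it already supplies the $\liminf$ half of what I need. The missing ingredient is the matching upper estimate $\limsup_{w\to R} K(w)/|w| \le \mathcal{H}(\alpha)$.

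For this upper bound I would use computability of the measure: since $\vec{\alpha}$ is computable, so is the longitudinal product $\mu[\vec{\alpha}]$ on $\Sigma^{\infty}$, and hence there is a constant $c$ such that $K(w) \le -\log\mu[\vec{\alpha}](w) + c$ for all $w \in \Sigma^*$, by the standard prefix-free coding argument for a computable probability measure. It then remains to show that along prefixes $w \sqsubseteq R$,
\[
-\log\mu[\vec{\alpha}](w) \;=\; \sum_{n=0}^{|w|-1}\bigl(-\log\alpha^{(n)}(w[n])\bigr) \;=\; |w|\mathcal{H}(\alpha) + o(|w|).
\]
Two ingredients give this. The first is the hypothesis $\alpha^{(n)} \to \alpha$ pointwise on the finite set $\Sigma$, which lets me replace each $-\log\alpha^{(n)}(w[n])$ by $-\log\alpha(w[n])$ at an additive cost of $o(|w|)$ via Ces\`aro averaging. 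The second is the effective strong law of large numbers for sequences random with respect to $\vec{\alpha}$: the empirical frequency of each symbol $a\in\Sigma$ in the first $|w|$ positions of $R$ tends to $\alpha(a)$, because any persistent deviation can be turned into a lower semicomputable $\mu[\vec{\alpha}]$-martingale that succeeds on $R$, contradicting its randomness. Grouping the sum by the value of $w[n]$ then collapses the right side to $|w|\sum_a\alpha(a)(-\log\alpha(a)) + o(|w|) = |w|\mathcal{H}(\alpha) + o(|w|)$, and combining with the lower bound from Theorem~\ref{dim to H} (or equivalently from the Levin--Schnorr lower bound $K(w) \ge -\log\mu[\vec{\alpha}](w) - O(1)$) pinches $K(w)$ to the claimed value.

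I expect the main technical obstacle to be the treatment of symbols $a\in\Sigma$ with $\alpha(a)=0$: by the convention $0\log(1/0) = 0$ these contribute nothing to $\mathcal{H}(\alpha)$, but an individual summand $-\log\alpha^{(n)}(w[n])$ with $w[n]=a$ and $\alpha^{(n)}(a)$ small can be arbitrarily large. Here one must exploit randomness of $R$ directly: a position $n$ can have $w[n]=a$ only when $\alpha^{(n)}(a)>0$, and if the aggregate contribution of such positions grew faster than $o(|w|)$, one could build a lower semicomputable martingale against $\vec{\alpha}$ that succeeded on $R$. Modulo this bookkeeping, which is standard in effective Shannon--McMillan--Breiman-style arguments, the two halves combine to give $K(w) = |w|\mathcal{H}(\alpha) + o(|w|)$.
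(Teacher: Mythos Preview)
Your argument is correct, and in fact it supplies considerably more than the paper does: the paper gives no proof at all, merely labeling the statement ``a corollary to Theorem~\ref{dim to H}''. Your observation that Theorem~\ref{dim to H} as \emph{stated} only pins down the $\liminf$ of $K(w)/|w|$ is well taken; the $\limsup$ direction is genuinely missing from that formulation.

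The difference in approach is this. The paper is implicitly relying on the full strength of the cited result from \cite{jLutz03a,Lutz03b}, which in its original form yields $Dim(R)=\mathcal{H}(\alpha)/\log|\Sigma|$ as well as $dim(R)$; once both equal the same value, the limit of $K(w)/(|w|\log|\Sigma|)$ exists and the corollary is immediate. You instead reconstruct the $\limsup$ half from first principles via the coding inequality $K(w)\le -\log\mu[\vec{\alpha}](w)+O(1)$, the Levin--Schnorr lower bound, and an effective law of large numbers to evaluate $-\log\mu[\vec{\alpha}](w)$ along prefixes of $R$. This is essentially the proof \emph{of} the cited theorem rather than an application of it, so your route is more self-contained but also longer. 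Either way the conclusion is the same; your handling of the $\alpha(a)=0$ case is the only place where genuine care is needed, and you have flagged it correctly.
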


\begin{lemma}\label{I to I}
If $\vec{\alpha}$ is a computable sequence of probability measures $\alpha^{(n)}$ on $\Sigma \times \Sigma$ that converge to a probability measure $\alpha$ on $\Sigma \times \Sigma$, then for every coupled pair $(R_1,R_2) \in \Sigma^{\infty} \times \Sigma^{\infty}$ that is random with respect to $\vec{\alpha}$ and $(u,w) \sqsubseteq (R_1,R_2)$,
\[
I(u:w) = |u|I(\alpha_1:\alpha_2) + o(|u|).
\]
\end{lemma}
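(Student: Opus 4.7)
The plan is to express $I(u:w)$ in terms of unconditional Kolmogorov complexities via Lemma \ref{x:y}, estimate each of those complexities asymptotically using Corollary \ref{K to H} applied to an appropriate marginal of $\vec{\alpha}$, and then recognize the resulting linear combination of Shannon entropies as $|u|$ times $I(\alpha_1:\alpha_2)$ via the classical identity $\mathcal{H}(\alpha_1) + \mathcal{H}(\alpha_2) - \mathcal{H}(\alpha) = I(\alpha_1:\alpha_2)$.

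First I would invoke Lemma \ref{x:y} on the pair $(u,w)$, which (using $|u|=|w|$ so the error term $o(|u|)$ is unambiguous) yields
\[
I(u:w) = K(u) + K(w) - K(u,w) + o(|u|).
\]
The complexity $K(u,w)$ is the Kolmogorov complexity of a length-$|u|$ string over the alphabet $\Sigma \times \Sigma$, so Corollary \ref{K to H} applies directly to $\vec{\alpha}$ and the prefix $(u,w) \sqsubseteq (R_1,R_2)$, giving $K(u,w) = |u|\mathcal{H}(\alpha) + o(|u|)$. For $K(u)$ and $K(w)$ I need to apply Corollary \ref{K to H} with respect to the marginal sequences $\vec{\alpha_1} = (\alpha_1^{(0)}, \alpha_1^{(1)}, \ldots)$ and $\vec{\alpha_2}$; these are computable because $\vec{\alpha}$ is, and they converge to $\alpha_1$ and $\alpha_2$ respectively, since the finite-sum marginalization preserves the pointwise convergence $\alpha^{(n)} \to \alpha$ on the finite set $\Sigma \times \Sigma$. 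I also need $R_1$ random with respect to $\vec{\alpha_1}$ and $R_2$ random with respect to $\vec{\alpha_2}$, which is the natural extension of Observation \ref{rand marg} from a single measure to a computable sequence of measures. This yields $K(u) = |u|\mathcal{H}(\alpha_1) + o(|u|)$ and $K(w) = |u|\mathcal{H}(\alpha_2) + o(|u|)$.

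Combining these three estimates gives
\[
I(u:w) = |u|\bigl(\mathcal{H}(\alpha_1) + \mathcal{H}(\alpha_2) - \mathcal{H}(\alpha)\bigr) + o(|u|),
\]
and the argument finishes by the classical Shannon identity, obtained by expanding $\log(\alpha(a,b)/(\alpha_1(a)\alpha_2(b)))$ into three logarithms in the definition of $I(\alpha_1:\alpha_2)$ and collapsing two of the sums via $\sum_{(a,b)}\alpha(a,b)\log\alpha_1(a) = \sum_a \alpha_1(a)\log\alpha_1(a) = -\mathcal{H}(\alpha_1)$, and symmetrically for $\alpha_2$.

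The main obstacle I anticipate is the marginal-randomness step: Observation \ref{rand marg} is stated for a single computable probability measure $\nu$, whereas here I need the analogous fact for a computable sequence $\vec{\alpha}$ of measures. The standard argument lifts any successful lower semicomputable $\vec{\alpha_1}$-martingale $d$ on $R_1$ to a lower semicomputable $\vec{\alpha}$-martingale $D$ on $(R_1,R_2)$ via $D(u,w) = d(u)$, with the martingale equation verified using $\alpha_1^{(n)}(a) = \sum_b \alpha^{(n)}(a,b)$, and with computability of $\vec{\alpha}$ ensuring that lower semicomputability transfers correctly. Everything else is routine bookkeeping with the already-proven lemmas.
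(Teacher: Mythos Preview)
Your proof is correct and follows essentially the same approach as the paper: invoke Lemma \ref{x:y}, then use Observation \ref{rand marg} together with Corollary \ref{K to H} to evaluate each Kolmogorov complexity, and finish with the Shannon identity $\mathcal{H}(\alpha_1)+\mathcal{H}(\alpha_2)-\mathcal{H}(\alpha)=I(\alpha_1:\alpha_2)$. Your worry about extending Observation \ref{rand marg} to sequences is unnecessary, since $\mu[\vec{\alpha}]$ is itself a computable probability measure on $\Sigma^{\infty}\times\Sigma^{\infty}$ with marginals $\mu[\vec{\alpha_1}]$ and $\mu[\vec{\alpha_2}]$, so the observation applies as stated.
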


\begin{proof}
By Lemma \ref{x:y},
\[
I(u:w) = K(u) + K(w) - K(u,w) + o(|u|).
\]
We then apply Observation \ref{rand marg} and Corollary \ref{K to H} to obtain
\begin{align*}
I(u:w) &= |u|(\mathcal{H}(\alpha_1) + \mathcal{H}(\alpha_2) - \mathcal{H}(\alpha)) + o(|u|)\\
			 &= |u|I(\alpha_1:\alpha_2) + o(|u|). \qedhere
\end{align*}
\end{proof}
The following is a corollary to Lemma \ref{I to I}.

\begin{corollary}\label{i to i}
If $\alpha$ is a computable, positive probability measure on $\Sigma \times \Sigma$, then, for every sequence $(R_1,R_2) \in \Sigma^{\infty} \times \Sigma^{\infty}$ that is random with respect to $\alpha$ and $(u,w) \sqsubseteq (R_1,R_2)$,
\[
I(u:w) = |u|I(\alpha_1:\alpha_2) + o(|u|).
\]
\end{corollary}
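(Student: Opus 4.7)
The plan is to deduce this corollary as a direct specialization of Lemma \ref{I to I}, taking the converging sequence of measures to be the constant sequence. Given a computable, positive probability measure $\alpha$ on $\Sigma \times \Sigma$, define $\vec{\alpha} = (\alpha^{(0)}, \alpha^{(1)}, \ldots)$ by $\alpha^{(n)} = \alpha$ for every $n \in \mathbb{N}$. Since $\alpha$ is computable, the map $n \mapsto \alpha^{(n)}$ is computable, so $\vec{\alpha}$ is a computable sequence of probability measures on $\Sigma \times \Sigma$, and it trivially converges to $\alpha$.

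The next step is to verify that the notion of randomness appearing in the corollary (random with respect to the single measure $\alpha$) coincides with the notion appearing in Lemma \ref{I to I} (random with respect to $\vec{\alpha}$, i.e., with respect to $\mu[\vec{\alpha}]$). Unpacking the longitudinal product definition, for every $w \in (\Sigma \times \Sigma)^*$,
\[
\mu[\vec{\alpha}](w) = \prod_{n=0}^{|w|-1} \alpha^{(n)}(w[n]) = \prod_{n=0}^{|w|-1} \alpha(w[n]),
\]
which is precisely the Bernoulli (i.i.d.) product measure on $(\Sigma \times \Sigma)^{\infty}$ induced by $\alpha$. Hence $(R_1,R_2)$ is random with respect to $\alpha$ in the sense of the corollary if and only if it is random with respect to $\vec{\alpha}$ in the sense of Lemma \ref{I to I}.

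With these identifications in place, Lemma \ref{I to I} applies directly and yields the desired equality
\[
I(u:w) = |u|\,I(\alpha_1 : \alpha_2) + o(|u|)
\]
for every $(u,w) \sqsubseteq (R_1, R_2)$, since the marginals of the limit measure $\alpha$ are by construction $\alpha_1$ and $\alpha_2$. There is no real obstacle here; the positivity hypothesis is not even strictly needed for the conclusion, but it guarantees that the marginals $\alpha_1, \alpha_2$ are positive and that no degenerate terms arise in $I(\alpha_1:\alpha_2)$, matching the usual conventions. The entire proof is therefore a one-line invocation of Lemma \ref{I to I} after exhibiting the constant sequence $\vec{\alpha}$.
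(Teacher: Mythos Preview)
Your proposal is correct and matches the paper's approach exactly: the paper states this result as an immediate corollary of Lemma \ref{I to I}, and specializing to the constant sequence $\alpha^{(n)} = \alpha$ is precisely the intended derivation.
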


In applications one often encounters longitudinal product measures $\mu[\vec{\alpha}]$ in which the probability measures $\alpha^{(n)}$ are all the same (the i.i.d. case) or else converge to some limiting probability measure. The following theorem says that, in such cases, the mutual dimensions of coupled pairs of random sequences are easy to compute.

\begin{theorem}\label{random resp}
If $\vec{\alpha}$ is a computable sequence of probability measures $\alpha^{(n)}$ on $\Sigma \times \Sigma$ that converge to a probability measure $\alpha$ on $\Sigma \times \Sigma$, then for every coupled pair $(R_1,R_2) \in \Sigma^{\infty} \times \Sigma^{\infty}$ that is random with respect to $\vec{\alpha}$,
\[
mdim(R_1:R_2) = Mdim(R_1:R_2) = \frac{I(\alpha_1:\alpha_2)}{\log |\Sigma|}.
\]
\end{theorem}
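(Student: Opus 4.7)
The plan is to reduce the theorem directly to Lemma \ref{I to I}, which already does essentially all the work: it tells us that if $(u,w) \sqsubseteq (R_1,R_2)$, then $I(u:w) = |u|\, I(\alpha_1:\alpha_2) + o(|u|)$ as $|u| \to \infty$. Since the definitions of $mdim(R_1:R_2)$ and $Mdim(R_1:R_2)$ are, respectively, the $\liminf$ and $\limsup$ of $I(u:w)/(|u|\log|\Sigma|)$ as $(u,w) \to (R_1,R_2)$ (with $|u|=|w|$ tending to infinity), the computation is essentially a single step.

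Concretely, I would first fix an arbitrary coupled random pair $(R_1,R_2)$ and any pair of prefixes $(u,w) \sqsubseteq (R_1,R_2)$ with $|u|=|w|=n$. By Lemma \ref{I to I}, we have
\[
\frac{I(u:w)}{|u|\log|\Sigma|} = \frac{I(\alpha_1:\alpha_2)}{\log|\Sigma|} + \frac{o(n)}{n\log|\Sigma|}.
\]
Since the error term tends to $0$ as $n \to \infty$ uniformly in the choice of prefixes of length $n$ (this uniformity is exactly what Lemma \ref{I to I} delivers, because the $o(|u|)$ bound comes from the Shannon--entropy identities applied to the marginals via Observation \ref{rand marg} and Corollary \ref{K to H}, both of which assert bounds that hold for all prefixes of $R_1$, $R_2$, and $(R_1,R_2)$), we conclude that both the $\liminf$ and the $\limsup$ of $I(u:w)/(|u|\log|\Sigma|)$ equal $I(\alpha_1:\alpha_2)/\log|\Sigma|$. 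This simultaneously yields $mdim(R_1:R_2) = Mdim(R_1:R_2) = I(\alpha_1:\alpha_2)/\log|\Sigma|$.

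There is essentially no obstacle left once Lemma \ref{I to I} is in hand; the real content of the theorem was pushed into that lemma and into the underlying Theorem \ref{dim to H} of \cite{jLutz03a}. The only minor subtlety is to confirm that Lemma \ref{I to I} is being applied correctly in the coupled setting: the lemma is stated for $(u,w) \sqsubseteq (R_1,R_2)$ with $(R_1,R_2)$ random with respect to $\vec{\alpha}$ on $\Sigma \times \Sigma$, which is exactly the hypothesis of Theorem \ref{random resp}, and by Observation \ref{rand marg} the marginals $R_1, R_2$ are random with respect to $\mu[\vec{\alpha_1}]$ and $\mu[\vec{\alpha_2}]$, so the entropy terms for $K(u)$ and $K(w)$ used inside the proof of Lemma \ref{I to I} are justified. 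After noting this, the conclusion is immediate.
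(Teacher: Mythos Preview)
Your proposal is correct and follows essentially the same approach as the paper: both proofs invoke Lemma \ref{I to I} to obtain $I(u:w)=|u|\,I(\alpha_1:\alpha_2)+o(|u|)$, substitute this into the definitions of $mdim$ and $Mdim$, and read off the limit. The paper's proof is in fact even terser than yours, omitting the discussion of uniformity and the role of Observation \ref{rand marg} (which is already absorbed into the proof of Lemma \ref{I to I}).
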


\begin{proof}
By Lemma \ref{I to I}, we have
\begin{align*}
mdim(R_1:R_2) &= \displaystyle\liminf\limits_{(u,w) \rightarrow (R_1,R_2)}\frac{I(u:w)}{|u|\log |\Sigma|}\\
							&= \displaystyle\liminf\limits_{(u,w) \rightarrow (R_1,R_2)}\frac{|u|I(\alpha_1:\alpha_2) + o(|u|)}{|u|\log |\Sigma|}\\
							&= \frac{I(\alpha_1:\alpha_2)}{\log |\Sigma|}					
\end{align*}
A similar proof shows that $Mdim(R_1:R_2) = I(\alpha_1:\alpha_2)$. \qedhere
\end{proof}

\begin{expl}\label{rho example}
\normalfont{Let $\Sigma = \{0,1\}$, and let $\vec{\rho}$ be a computable sequence of reals $\rho_n \in [-1,1]$ that converge to a limit $\rho$. Define the probability measure $\alpha$ on $\Sigma \times \Sigma$ by $\alpha(0,0) = \alpha(1,1) = \frac{1+\rho}{4}$ and $\alpha(0,1) = \alpha(1,0) = \frac{1-\rho}{4}$, and let $\alpha_1$ and $\alpha_2$ be the marginals of $\alpha$. If $\alpha^{\vec{\rho}}$ is as in Example \ref{murho}, then for every pair $(R_1,R_2) \in \Sigma^{\infty} \times \Sigma^{\infty}$ that is random with respect to $\alpha^{\vec{\rho}}$, Theorem \ref{random resp} tells us that}
\begin{align*}
mdim(R_1:R_2) &= Mdim(R_1:R_2)\\
							&= I(\alpha_1:\alpha_2)\\
							&= 1 - \mathcal{H}(\frac{1+\rho}{2}).
\end{align*}
In particular, if the limit $\rho$ is 0, then
\[
mdim(R_1:R_2) = Mdim(R_1:R_2) = 0.
\]
\end{expl}

Theorem \ref{random resp} has the following easy consequence, which generalizes the last sentence of Example \ref{rho example}.

\begin{corollary}\label{mdim seq zero}
If $\vec{\alpha}$ is a computable sequence of probability measures $\alpha^{(n)}$ on $\Sigma \times \Sigma$ that converge to a product probability measure $\alpha_1 \times \alpha_2$ on $\Sigma \times \Sigma$, then for every coupled pair $(R_1,R_2) \in \Sigma^{\infty} \times \Sigma^{\infty}$ that is random with respect to $\vec{\alpha}$,
\[
mdim(R_1:R_2) = Mdim(R_1:R_2) = 0.
\]
\end{corollary}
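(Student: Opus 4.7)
The plan is to derive this corollary as an immediate consequence of Theorem \ref{random resp}, by recognizing that the hypothesis hands us the limiting joint measure in an especially convenient form. Concretely, I would take the limiting probability measure on $\Sigma \times \Sigma$ in Theorem \ref{random resp} to be $\alpha := \alpha_1 \times \alpha_2$, and then verify that its marginals (in the paper's notation $\alpha_1, \alpha_2$) are exactly the factors $\alpha_1$ and $\alpha_2$ of the product. This is a direct calculation from the definition of marginal, since for each $a \in \Sigma$,
\[
\sum_{b \in \Sigma}(\alpha_1 \times \alpha_2)(a,b) = \alpha_1(a)\sum_{b \in \Sigma}\alpha_2(b) = \alpha_1(a),
\]
and symmetrically for the second marginal. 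So there is no notational clash and Theorem \ref{random resp} applies verbatim.

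Next I would observe that the Shannon mutual information of a product measure is zero. Applying the definition of $I(\alpha_1:\alpha_2)$ given earlier in the section to the joint $\alpha = \alpha_1 \times \alpha_2$ yields
\[
I(\alpha_1:\alpha_2) = \sum_{(a,b) \in \Sigma \times \Sigma} \alpha_1(a)\alpha_2(b)\log \frac{\alpha_1(a)\alpha_2(b)}{\alpha_1(a)\alpha_2(b)} = 0,
\]
using the standard convention $0 \log 0 = 0$ for any pairs $(a,b)$ at which one of the marginals vanishes (such terms contribute $0$ to the sum regardless of the logarithm). Plugging this into the conclusion of Theorem \ref{random resp} gives $mdim(R_1:R_2) = Mdim(R_1:R_2) = 0$, which is exactly what the corollary asserts.

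There is no real obstacle here; the only mild care needed is to ensure that the reuse of the symbols $\alpha_1$ and $\alpha_2$ is consistent, i.e., that the marginals of the limit joint really coincide with the prescribed factors. Once that is noted, the corollary is a one-line specialization of Theorem \ref{random resp}.
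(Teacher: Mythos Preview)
Your proposal is correct and matches the paper's own treatment: the corollary is stated as an ``easy consequence'' of Theorem~\ref{random resp} with no separate proof, and your argument---identifying the limit joint measure as $\alpha_1 \times \alpha_2$, checking its marginals, and noting that the Shannon mutual information of a product measure vanishes---is exactly the intended specialization. Nothing further is needed.
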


Applying Corollary \ref{mdim seq zero} to a constant sequence $\vec{\alpha}$ in which each $\alpha^{(n)}$ is a product probability measure $\alpha_1 \times \alpha_2$ on $\Sigma \times \Sigma$ gives the following.

\begin{corollary}\label{mdim zero}
If $\alpha_1$ and $\alpha_2$ are computable probability measures on $\Sigma$, and if $R_1,R_2 \in \Sigma^{\infty}$ are independently random with respect to $\alpha_1,\alpha_2$, respectively, then
\[
mdim(R_1:R_2) = Mdim(R_1:R_2) = 0.
\]
\end{corollary}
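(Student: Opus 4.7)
The plan is to deduce this directly from Corollary \ref{mdim seq zero} by exhibiting the right constant sequence of probability measures, exactly as the text preceding the corollary suggests. First I would let $\alpha^{(n)} = \alpha_1 \times \alpha_2$ for every $n \in \mathbb{N}$, and set $\vec{\alpha} = (\alpha^{(0)}, \alpha^{(1)}, \ldots)$. This is a computable sequence of probability measures on $\Sigma \times \Sigma$ (since $\alpha_1$ and $\alpha_2$ are computable), and because it is constant it trivially converges to the product probability measure $\alpha_1 \times \alpha_2$.

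The key bookkeeping step is to verify that the pair $(R_1, R_2)$ is random with respect to $\vec{\alpha}$ in the sense defined in the paper, i.e., random with respect to the longitudinal product $\mu[\vec{\alpha}]$ on $(\Sigma \times \Sigma)^{\infty}$. For this I would unfold the definition of $\mu[\vec{\alpha}]$ on a string $(u,v) \in (\Sigma \times \Sigma)^*$:
\[
\mu[\vec{\alpha}](u,v) = \prod_{n=0}^{|u|-1} \alpha^{(n)}(u[n], v[n]) = \prod_{n=0}^{|u|-1} \alpha_1(u[n])\,\alpha_2(v[n]) = \mu[\vec{\alpha_1}](u)\, \mu[\vec{\alpha_2}](v),
\]
where $\vec{\alpha_i} = (\alpha_i, \alpha_i, \ldots)$. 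The right-hand side is exactly $\big(\mu[\vec{\alpha_1}] \times \mu[\vec{\alpha_2}]\big)(u,v)$, so $\mu[\vec{\alpha}]$ coincides with the product measure $\mu[\vec{\alpha_1}] \times \mu[\vec{\alpha_2}]$. By the definition of independent randomness recalled in Section 3, the hypothesis that $R_1$ and $R_2$ are independently random with respect to $\alpha_1$ and $\alpha_2$ means precisely that $(R_1,R_2)$ is random with respect to this product measure, hence with respect to $\vec{\alpha}$.

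Having matched all hypotheses of Corollary \ref{mdim seq zero}, I would invoke it to conclude $mdim(R_1:R_2) = Mdim(R_1:R_2) = 0$. There is no real obstacle here; the only substantive point is the identification $\mu[\vec{\alpha}] = \mu[\vec{\alpha_1}] \times \mu[\vec{\alpha_2}]$, which is an immediate computation from the definitions and confirms that longitudinal products of pointwise product measures agree with the transverse product of longitudinal products.
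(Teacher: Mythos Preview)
Your proposal is correct and follows exactly the approach indicated in the paper, which simply notes that applying Corollary \ref{mdim seq zero} to the constant sequence $\alpha^{(n)} = \alpha_1 \times \alpha_2$ yields the result. Your added verification that $\mu[\vec{\alpha}] = \mu[\vec{\alpha_1}] \times \mu[\vec{\alpha_2}]$ is precisely the bookkeeping needed to match the hypothesis of independent randomness to the hypothesis of Corollary \ref{mdim seq zero}, and it is carried out correctly.
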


We conclude this section by showing that the converse of Corollary \ref{mdim zero} does \emph{not} hold. This can be done via a direct construction, but it is more instructive to use a beautiful theorem of Kakutani, van Lambalgen, and Vovk. The \emph{Hellinger distance} between two probability measures $\alpha_1$ and $\alpha_2$ on $\Sigma$ is
\[
H(\alpha_1,\alpha_2) = \sqrt{\displaystyle\sum\limits_{a \in \Sigma}(\sqrt{\alpha_1(a)} - \sqrt{\alpha_2(a)})^2}.
\]
(See \cite{LePeWi09}, for example.) A sequence $\alpha = (\alpha^{(0)}, \alpha^{(1)}, \ldots)$ of probability measures on $\Sigma$ is \emph{strongly positive} if there is a real number $\delta > 0$ such that, for all $n \in \mathbb{N}$ and $a \in \Sigma$, $\alpha^{(n)}(a) \geq \delta$. Kakutani \cite{Kaku48} proved the classical, measure-theoretic version of the following theorem, and van Lambalgen \cite{tLamb87,Lamb87} and Vovk \cite{Vovk87} extended it to algorithmic randomness.

\begin{theorem}\label{helling}
Let $\vec{\alpha}$ and $\vec{\beta}$ be computable, strongly positive sequences of probability measures on $\Sigma$.
\begin{enumerate}
\item If
\[
\displaystyle\sum\limits_{n=0}^{\infty}H(\alpha^{(n)},\beta^{(n)})^2 < \infty,
\]
then a sequence $R \in \Sigma^{\infty}$ is random with respect to $\vec{\alpha}$ if and only if it is random with respect to $\vec{\beta}$.
\item If\
\[
\displaystyle\sum\limits_{n=0}^{\infty}H(\alpha^{(n)},\beta^{(n)})^2 = \infty,
\]
then no sequence is random with respect to both $\vec{\alpha}$ and $\vec{\beta}$.
\end{enumerate}
\end{theorem}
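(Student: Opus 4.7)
The plan is to analyze the Kakutani--Hellinger martingale
\[
M(w) = \prod_{i=0}^{|w|-1}\sqrt{\beta^{(i)}(w[i])/\alpha^{(i)}(w[i])},
\]
which is computable by the computability and strong positivity of $\vec{\alpha}$ and $\vec{\beta}$. The identity $\sum_{a}\sqrt{\alpha(a)\beta(a)} = 1 - \tfrac12 H(\alpha,\beta)^2$ yields $\sum_a \alpha^{(n)}(a)\,M(wa) = M(w)\bigl(1 - \tfrac12 H_n^2\bigr)$, where $H_i := H(\alpha^{(i)},\beta^{(i)})$, so $M$ is an $\vec{\alpha}$-supermartingale. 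Writing $p_n = \prod_{i<n}(1 - \tfrac12 H_i^2)$, the rescaled $\tilde M(w) = M(w)/p_{|w|}$ is a computable $\vec{\alpha}$-martingale, and, setting $N = 1/M$, the analogue $\tilde N(w) = N(w)/p_{|w|}$ is a computable $\vec{\beta}$-martingale.

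For part (2), if $\sum H_i^2 = \infty$ then $p_n \to 0$. Since $M \cdot N \equiv 1$, we get
\[
\tilde M(w)\,\tilde N(w) = \frac{1}{p_{|w|}^2} \;\longrightarrow\; \infty,
\]
so on every $R \in \Sigma^\infty$ at least one of $\tilde M, \tilde N$ is unbounded along the prefixes of $R$, and hence succeeds on $R$. As $\tilde M$ is a lower semicomputable $\vec{\alpha}$-martingale and $\tilde N$ is a lower semicomputable $\vec{\beta}$-martingale, no $R$ can be random with respect to both.

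For part (1), if $\sum H_i^2 < \infty$ then $p_n \to p_\infty > 0$, so $M$ and $N$ are computable supermartingales of bounded initial expectation under $\vec{\alpha}$ and $\vec{\beta}$, respectively. By the effective form of Ville's theorem neither succeeds on a sequence random with respect to its own measure. The classical Kakutani theorem shows that $L := M^2 = \mu[\vec{\beta}]/\mu[\vec{\alpha}]$ is the Radon--Nikodym derivative of two mutually equivalent measures, so $L_\infty$ is a strictly positive finite limit almost surely under either measure; one then refines this effectively to $0 < \liminf L(w) \leq \limsup L(w) < \infty$ along every $R$ random with respect to either $\vec{\alpha}$ or $\vec{\beta}$. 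Given a lower semicomputable $\vec{\beta}$-martingale $d$, the product $d(w) \cdot L(w)$ is a lower semicomputable $\vec{\alpha}$-martingale that succeeds on $R$ iff $d$ does, so the symmetric argument yields equivalence of the two randomness classes.

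The main obstacle is the effective refinement of Kakutani's classical conclusion in part (1): converting the statement ``$L_\infty > 0$ almost surely'' into ``$L_\infty > 0$ along every ML-random sequence'' requires building an explicit effective ML-test from the null set $\{L_\infty = 0\}$, using the uniform integrability of $\{M_n\}$ guaranteed by $\sum H_i^2 < \infty$. Strong positivity is essential throughout: it makes the per-step ratios $\sqrt{\beta^{(i)}(a)/\alpha^{(i)}(a)}$ uniformly bounded, keeps $M, N, \tilde M, \tilde N$ genuinely computable, and prevents the degenerate boundary behaviour that would derail these effective convergence arguments.
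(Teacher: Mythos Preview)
The paper does not prove this theorem at all: it is stated as a known result, attributed to Kakutani for the classical measure-theoretic version and to van~Lambalgen and Vovk for the algorithmic extension, and then simply used. So there is no proof in the paper to compare your attempt against.

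That said, your sketch follows the standard Vovk approach and is broadly correct in outline. Part~(2) is essentially complete: the identity $\sum_a\sqrt{\alpha(a)\beta(a)}=1-\tfrac12 H(\alpha,\beta)^2$ is right, the rescalings $\tilde M$ and $\tilde N$ are genuine computable martingales for $\mu[\vec\alpha]$ and $\mu[\vec\beta]$ respectively, and the product argument $\tilde M\tilde N=1/p_{|w|}^{2}\to\infty$ is clean (strong positivity guarantees each $H_n^2<2$, so the factors $1-\tfrac12 H_n^2$ are in $(0,1)$ and $\sum H_n^2=\infty$ forces $p_n\to 0$).

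For part~(1), you correctly identify where the work lies. The transfer step ``$d\cdot L$ is a lower semicomputable $\mu[\vec\alpha]$-martingale whenever $d$ is one for $\mu[\vec\beta]$'' is fine, and $\limsup L<\infty$ along $\vec\alpha$-random $R$ is immediate since $L$ itself is a computable $\mu[\vec\alpha]$-martingale. The genuine gap is exactly what you flag: showing $\liminf_{w\to R} L(w)>0$ for every $\vec\alpha$-random $R$. Your appeal to ``effectively refining Kakutani's almost-sure conclusion via uniform integrability'' gestures at the right idea but is not a proof; this is precisely the content of Vovk's argument, and it does require real work (one route uses an effective $L^2$-bounded martingale convergence for $\tilde M$ together with the fact that $p_n\to p_\infty>0$). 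So your proposal is an accurate roadmap with the hard lemma left as a black box---which is reasonable given that the paper itself treats the whole theorem as a citation.
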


\begin{observation}\label{coup helling}
Let $\Sigma = \{0,1\}$. If $\rho = [-1,1]$ and probability measure $\alpha$ on $\Sigma \times \Sigma$ is defined from $\rho$ as in Example \ref{rho example}, then
\[
H(\alpha_1 \times \alpha_2, \alpha)^2 = 2 - \sqrt{1+\rho} - \sqrt{1-\rho}.
\]
\end{observation}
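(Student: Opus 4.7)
The proof is a direct calculation, so the plan is simply to organize that calculation cleanly. First I would compute the marginals: since $\alpha(0,0) + \alpha(0,1) = \frac{1+\rho}{4} + \frac{1-\rho}{4} = \frac{1}{2}$, and similarly for the other marginal sums, both $\alpha_1$ and $\alpha_2$ are the uniform probability measure on $\Sigma$. Consequently, $(\alpha_1 \times \alpha_2)(a,b) = \frac{1}{4}$ for every $(a,b) \in \Sigma \times \Sigma$.

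Next I would apply the definition of Hellinger distance, treating $\Sigma \times \Sigma$ as the alphabet, to obtain
\[
H(\alpha_1 \times \alpha_2, \alpha)^2 = \sum_{(a,b) \in \Sigma \times \Sigma} \left(\sqrt{\tfrac{1}{4}} - \sqrt{\alpha(a,b)}\right)^2.
\]
The four summands split into two equal pairs: the contributions from $(0,0)$ and $(1,1)$ are each $\frac{1}{4}(1 - \sqrt{1+\rho})^2$, and the contributions from $(0,1)$ and $(1,0)$ are each $\frac{1}{4}(1 - \sqrt{1-\rho})^2$.

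Finally, I would sum these four terms and simplify:
\[
\tfrac{1}{2}\bigl[(1 - \sqrt{1+\rho})^2 + (1 - \sqrt{1-\rho})^2\bigr] = \tfrac{1}{2}\bigl[4 - 2\sqrt{1+\rho} - 2\sqrt{1-\rho}\bigr] = 2 - \sqrt{1+\rho} - \sqrt{1-\rho},
\]
using $(1-\sqrt{1\pm\rho})^2 = 1 - 2\sqrt{1\pm\rho} + (1\pm\rho)$ and noting the $\rho$ terms cancel. There is no genuine obstacle here; the only thing to be careful about is verifying that the marginals are indeed uniform (which is what makes the product measure constant and the formula take this symmetric form), after which the algebra is routine.
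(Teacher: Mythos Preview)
Your proposal is correct and follows essentially the same approach as the paper: both compute that the marginals are uniform (so $\sqrt{(\alpha_1\times\alpha_2)(a,b)}=\tfrac{1}{2}$), group the four summands into two equal pairs according to whether $\alpha(a,b)=\tfrac{1+\rho}{4}$ or $\tfrac{1-\rho}{4}$, and then simplify. The only difference is that you make the marginal computation and the final algebraic expansion slightly more explicit than the paper does.
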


\begin{proof}
Assume the hypothesis. Then
\begin{align*}
H(\alpha_1 \times \alpha_2, \alpha)^2 &= \displaystyle\sum\limits_{a,b \in \{0,1\}}(\sqrt{\alpha_1(a)\alpha_2(b)} - \sqrt{\alpha(a,b)})^2\\
																			&= \displaystyle\sum\limits_{a,b \in \{0,1\}}\bigg( \frac{1}{2} - \sqrt{\alpha(a,b)} \bigg)^2\\
																			&= 2\bigg( \frac{1}{2} - \sqrt{\frac{1+\rho}{4}} \bigg)^2 + 2 \bigg(\frac{1}{2} - \sqrt{\frac{1-\rho}{4}} \bigg)^2\\
																			&= 2 - \sqrt{1+\rho} - \sqrt{1-\rho}. \qedhere
\end{align*}
\end{proof}

\begin{corollary}\label{not ind}
Let $\Sigma = \{0,1\}$ and $\delta \in (0,1)$. Let $\vec{\rho}$ be a computable sequence of real numbers $\rho_n \in [\delta-1,1-\delta]$, and let $\alpha^{\vec{\rho}}$ be as in Example \ref{murho}. If
\[
\displaystyle\sum\limits_{n=0}^{\infty}\rho_n^2 = \infty,
\]
and if $(R_1,R_2) \in \Sigma^{\infty} \times \Sigma^{\infty}$ is random with respect to $\alpha^{\vec{\rho}}$, then $R_1$ and $R_2$ are not independently random with respect to the uniform probability measure on $\bf{C}$.
\end{corollary}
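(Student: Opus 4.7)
The plan is to derive a contradiction via Theorem \ref{helling}(2) applied to the two computable, strongly positive sequences $\vec{\alpha}=\alpha^{\vec{\rho}}$ and $\vec{\beta}$, where $\beta^{(n)}$ is the uniform probability measure on $\Sigma\times\Sigma$, i.e., $\beta^{(n)}(a,b)=1/4$ for all $a,b\in\Sigma$. Note that $\mu[\vec{\beta}]$ is precisely the uniform probability measure on $(\Sigma\times\Sigma)^{\infty}\cong\mathbf{C}\times\mathbf{C}$, which equals $\mu\times\mu$. Strong positivity of $\vec{\alpha}$ holds because the hypothesis $\rho_n\in[\delta-1,1-\delta]$ forces each $\alpha^{(n)}(a,b)\geq\delta/4$; strong positivity of $\vec{\beta}$ is immediate.

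Next, I would compute the Hellinger distances. Since the marginals $\alpha^{(n)}_1$ and $\alpha^{(n)}_2$ are both uniform on $\Sigma$, we have $\alpha^{(n)}_1\times\alpha^{(n)}_2=\beta^{(n)}$, so Observation \ref{coup helling} yields
\[
H(\alpha^{(n)},\beta^{(n)})^2 \;=\; 2-\sqrt{1+\rho_n}-\sqrt{1-\rho_n}.
\]
I then need a lower bound of the form $H(\alpha^{(n)},\beta^{(n)})^2\geq c\,\rho_n^2$ for some constant $c=c(\delta)>0$. This follows from a short calculus observation: setting $f(x)=2-\sqrt{1+x}-\sqrt{1-x}$, one has $f(0)=f'(0)=0$ and $f''(x)=\frac{1}{4}(1+x)^{-3/2}+\frac{1}{4}(1-x)^{-3/2}$, which is bounded below by a positive constant $2c$ on the compact interval $[\delta-1,1-\delta]$. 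Taylor's theorem with remainder then gives $f(\rho_n)\geq c\,\rho_n^2$, so the hypothesis $\sum\rho_n^2=\infty$ forces $\sum_{n} H(\alpha^{(n)},\beta^{(n)})^2=\infty$.

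Finally I apply Theorem \ref{helling}(2) to conclude that no sequence in $(\Sigma\times\Sigma)^{\infty}$ is random with respect to both $\vec{\alpha}$ and $\vec{\beta}$. Suppose for contradiction that $R_1$ and $R_2$ are independently random with respect to the uniform probability measure on $\mathbf{C}$. By definition this means $(R_1,R_2)$ is random with respect to $\mu\times\mu=\mu[\vec{\beta}]$. But $(R_1,R_2)$ is also random with respect to $\mu[\vec{\alpha}]=\mu[\alpha^{\vec{\rho}}]$ by hypothesis, contradicting Theorem \ref{helling}(2). The only nontrivial step is the quadratic lower bound on $f(\rho_n)$, and the role of the parameter $\delta$ is precisely to keep $\rho_n$ bounded away from $\pm 1$ so that this lower bound (and the strong positivity needed to invoke Theorem \ref{helling}) holds uniformly in $n$.
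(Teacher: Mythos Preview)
Your proposal is correct and follows essentially the same route as the paper: apply Theorem~\ref{helling}(2) with $\vec{\alpha}=\alpha^{\vec{\rho}}$ and $\vec{\beta}$ the constant uniform measure on $\Sigma\times\Sigma$, use Observation~\ref{coup helling} to compute $H(\alpha^{(n)},\beta^{(n)})^2=2-\sqrt{1+\rho_n}-\sqrt{1-\rho_n}$, and then show this is bounded below by a constant multiple of $\rho_n^2$. The only difference is how that last inequality is justified: the paper simply cites the Taylor expansion $\sqrt{1+x}+\sqrt{1-x}=2-\tfrac{x^2}{4}+o(x^2)$ as $x\to 0$ (implicitly combining this with positivity of $f$ away from $0$ and compactness of $[\delta-1,1-\delta]$), whereas you obtain the uniform bound directly via the Lagrange remainder and a lower bound on $f''$. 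Your version is a bit more explicit, but the two arguments are effectively the same.
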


\begin{proof}
This follows immediately from Theorem \ref{helling}, Observation \ref{coup helling}, and the fact that
\[
\sqrt{1+x} + \sqrt{1-x} = 2 - \frac{x^2}{2} + o(x^2)
\]
as $x \rightarrow 0$. \qedhere
\end{proof}

\begin{corollary}\label{zero not ind}
There exist sequences $R_1,R_2 \in \bf{C}$ that are random with respect to the uniform probability measure on $\bf{C}$ and satisfy $Mdim(R_1:R_2) = 0$, but are not independently random.
\end{corollary}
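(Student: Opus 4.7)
The plan is to instantiate Corollary \ref{not ind} with a concrete computable sequence $\vec{\rho}$ that simultaneously forces $\rho_n \to 0$ (to kill the mutual dimension via Theorem \ref{random resp}/Corollary \ref{mdim seq zero}) and $\sum \rho_n^2 = \infty$ (to rule out independent randomness via Corollary \ref{not ind}). The classic ``borderline'' choice that achieves this is $\rho_n = 1/\sqrt{n+2}$.

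First I would verify the three required properties of this sequence: it is computable; it lies in the interval $[0, 1-\delta]$ for $\delta = 1 - 1/\sqrt{2} > 0$, so in particular in $[\delta-1, 1-\delta]$ as required by Corollary \ref{not ind}; and it satisfies $\rho_n \to 0$ together with $\sum_{n=0}^{\infty}\rho_n^2 = \sum_{n=0}^{\infty}\frac{1}{n+2} = \infty$. I would then form $\alpha^{\vec{\rho}}$ as in Example \ref{murho} and note that this is a computable probability measure on $\mathbf{C}\times \mathbf{C}$, so by a standard construction (or by the existence of Martin-L\"of random sequences with respect to any computable measure) there is some pair $(R_1,R_2)\in \mathbf{C}\times\mathbf{C}$ that is random with respect to $\alpha^{\vec{\rho}}$.

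Next I would extract the three conclusions about this fixed pair. Since the marginals of $\alpha^{\vec{\rho}}$ are both the uniform measure $\mu$ on $\mathbf{C}$ (this is precisely what is recorded in Example \ref{murho}), Observation \ref{rand marg} immediately yields that each of $R_1$ and $R_2$ is random with respect to the uniform measure. Because $\rho_n \to 0$, the probability measures $\alpha^{(n)}$ converge to the measure $\alpha$ on $\Sigma\times\Sigma$ given by $\alpha(a,b) = 1/4$ for all $(a,b)$, which is the product $\alpha_1 \times \alpha_2$ of its own marginals. Corollary \ref{mdim seq zero} (equivalently, direct application of Theorem \ref{random resp}, since $I(\alpha_1{:}\alpha_2)=0$ in this case) then gives $mdim(R_1{:}R_2) = Mdim(R_1{:}R_2) = 0$. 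Finally, since $\sum \rho_n^2 = \infty$, Corollary \ref{not ind} tells us that $R_1$ and $R_2$ are not independently random with respect to the uniform measure, which is the last assertion.

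There is essentially no technical obstacle here, as all the work has been done in Theorem \ref{helling}, Observation \ref{coup helling}, Corollary \ref{not ind}, and Theorem \ref{random resp}. The only mild subtlety is choosing a sequence $\vec{\rho}$ that is borderline in exactly the right sense: if $\rho_n$ decays faster than $1/\sqrt{n}$ the Hellinger sum becomes finite and the argument collapses, while if $\rho_n$ fails to go to $0$ then the mutual dimension would not vanish. The choice $\rho_n = 1/\sqrt{n+2}$ sits precisely on the boundary where both competing phenomena coexist, which is exactly what the corollary asserts is possible.
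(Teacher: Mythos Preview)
Your proposal is correct and follows essentially the same approach as the paper's own proof, including the identical choice $\rho_n = 1/\sqrt{n+2}$. The paper argues slightly more tersely, invoking Example~\ref{rho example} directly for the vanishing of $Mdim$, but the logical content is the same.
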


\begin{proof}
For each $n \in \mathbb{N}$, let
\[
\rho_n = \frac{1}{\sqrt{n+2}}.
\]
Let $\vec{\rho} = (\rho_0,\rho_1, \ldots)$, let $\alpha^{\vec{\rho}}$ be as in Example \ref{murho}, and let $(R_1,R_2) \in \Sigma^{\infty} \times \Sigma^{\infty}$ be random with respect to $\alpha^{\vec{\rho}}$. Observation \ref{rand marg} tells us that $R_1$ and $R_2$ are random with respect to the marginals of $\alpha^{\vec{\rho}}$, both of which are the uniform probability measure on $\bf{C}$. Since $\rho_n \rightarrow 0$ as $n \rightarrow \infty$, the last sentence in Example \ref{rho example} tells us (via Theorem \ref{random resp}) that $Mdim(R_1:R_2) = 0$. Since
\[
\displaystyle\sum\limits_{n=0}^{\infty}\rho_n^2 = \displaystyle\sum\limits_{n=0}^{\infty}\frac{1}{n+2} = \infty,
\]
Corollary \ref{not ind} tells us that $R_1$ and $R_2$ are not independently random. \qedhere
\end{proof}
\section{Billingsley Mutual Dimensions}
We begin this section by reviewing the Billingsley generalization of constructive dimension, i.e., dimension with respect to strongly positive probability measures. A probability measure $\beta$ on $\Sigma^{\infty}$ is \emph{strongly positive} if there exists $\delta > 0$ such that, for all $w \in \Sigma^*$ and $a \in \Sigma$, $\beta(wa) > \delta\beta(w)$.

\begin{definition}
The \emph{Shannon self-information} of $w \in \Sigma$ is
\[
\ell_{\beta}(w) = \displaystyle\sum\limits_{i=0}^{|w|-1}\log \frac{1}{\beta(w[i])}.
\]
\end{definition}

In \cite{jLutMay08}, Lutz and Mayordomo defined (and usefully applied) constructive Billingsley dimension in terms of gales and proved that it can be characterized using Kolmogorov complexity. Since Kolmogorov complexity is more relevant in this discussion, we treat the following theorem as a definition.

\begin{definition}[Definition](Lutz and Mayordomo \cite{jLutMay08}).
The \emph{dimension} of $S \in \Sigma^{\infty}$ \emph{with respect to} a strongly positive probability measure $\beta$ on $\Sigma^{\infty}$ is
\[
dim^{\beta}(S) = \displaystyle\liminf\limits_{w \rightarrow S}\frac{K(w)}{\ell_{\beta}(w)}.
\]
\end{definition}

In the above definition the denominator $\ell_{\beta}(w)$ normalizes the dimension to be a real number in $[0,1]$. It seems natural to define the Billingsley generalization of mutual dimension in a similar way by normalizing the algorithmic mutual information between $u$ and $w$ by $\log \frac{\beta(u,w)}{\beta_1(u)\beta_2(w)}$ (i.e., the \emph{self-mutual information} or \emph{pointwise mutual information} between $u$ and $w$ \cite{bHanKob07}) as $(u,w) \rightarrow (S,T)$. However, this results in bad behavior. For example, the mutual dimension between any two sequences with respect to the uniform probability measure on $\Sigma \times \Sigma$ is \emph{always} undefined. Other thoughtful modifications to this natural definition results in sequences having negative or infinitely large mutual dimension. The main problem here is that, given a particular probability measure, one can construct certain sequences whose prefixes have extremely large positive or negative self-mutual information. In order to avoid undesirable behavior, we restrict the definition of Billingsley mutual dimension to sequences that are mutually normalizable.

\begin{definition}
Let $\beta$ be a probability measure on $\Sigma^{\infty} \times \Sigma^{\infty}$. Two sequences $S,T \in \Sigma^{\infty}$ are \emph{mutually} $\beta$--\emph{normalizable} (in this order) if
\[
\displaystyle\lim\limits_{(u,w) \rightarrow (S,T)}\frac{\ell_{\beta_1}(u)}{\ell_{\beta_2}(w)} = 1.
\]
\end{definition}

\begin{definition}
Let $S,T \in \Sigma^{\infty}$ be mutually $\beta$--normalizable. The \emph{upper} and \emph{lower mutual dimensions} between $S$ and $T$ with respect to $\beta$ are
\begin{displaymath}
mdim^{\beta}(S:T) = \displaystyle\liminf_{(u,w) \rightarrow (S,T)}\frac{I(u:w)}{\ell_{\beta_1}(u)} = \displaystyle\liminf_{(u,w) \rightarrow (S,T)}\frac{I(u:w)}{\ell_{\beta_2}(w)}
\end{displaymath}
and
\begin{displaymath}
Mdim^{\beta}(S:T) = \displaystyle\limsup_{(u,w) \rightarrow (S,T)}\frac{I(u:w)}{\ell_{\beta_1}(u)} = \displaystyle\limsup_{(u,w) \rightarrow (S,T)}\frac{I(u:w)}{\ell_{\beta_2}(w)},
\end{displaymath}
respectively.
\end{definition}
The above definition has nice properties because $\beta$--normalizable sequences have prefixes with asymptotically equivalent self-information. Given the basic properties of mutual information and Shannon self-information, we can see that
\[
0 \leq mdim^{\beta}(S:T) \leq \min \{dim^{\beta_1}(S), dim^{\beta_2}(T)\} \leq 1.
\]
Clearly, $Mdim^{\beta}$ also has a similar property.

\begin{definition}
Let $\alpha$ and $\beta$ be probability measure on $\Sigma$. The \emph{Kullback-Leibler divergence} between $\alpha$ and $\beta$ is
\[
\mathcal{D}(\alpha||\beta) = \displaystyle\sum\limits_{a \in \Sigma}\alpha(a)\log \frac{\alpha(a)}{\beta(a)}
\]
\end{definition}

The following lemma is useful when proving Lemma \ref{equiv norm} and Theorem \ref{mdf}.

\begin{lemma}[Frequency Divergence Lemma \cite{Lutz11}]
If $\alpha$ and $\beta$ are positive probability measures on $\Sigma$, then, for all $S \in FREQ^{\alpha}$,
\[
\ell_{\beta}(w) = (\mathcal{H}(\alpha) + \mathcal{D}(\alpha || \beta))|w| + o(|w|)
\]
as $w \rightarrow S$.
\end{lemma}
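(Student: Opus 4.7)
The plan is a short counting argument: expand $\ell_\beta(w)$ as a sum indexed by alphabet symbols, replace the symbol counts by their limiting frequencies using the hypothesis $S \in FREQ^\alpha$, and recognize the resulting coefficient as $\mathcal{H}(\alpha) + \mathcal{D}(\alpha||\beta)$.

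Concretely, for a string $w \in \Sigma^*$ let $n_a(w)$ denote the number of positions $i < |w|$ with $w[i] = a$. Grouping the terms in the definition of $\ell_\beta$ by symbol gives
\[
\ell_\beta(w) = \sum_{a \in \Sigma} n_a(w) \log \frac{1}{\beta(a)}.
\]
Since $S \in FREQ^\alpha$, for each $a \in \Sigma$ we have $n_a(w)/|w| \to \alpha(a)$ as $w \to S$, and because $\Sigma$ is finite this convergence holds simultaneously over $a$. Hence $n_a(w) = \alpha(a)|w| + o(|w|)$ uniformly in $a$, and substitution yields
\[
\ell_\beta(w) = |w| \sum_{a \in \Sigma} \alpha(a) \log \frac{1}{\beta(a)} + o(|w|),
\]
where the combined error remains $o(|w|)$ because each coefficient $\log(1/\beta(a))$ is finite (here we use positivity of $\beta$) and there are only $|\Sigma|$ such terms.

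Finally, the standard algebraic decomposition
\[
\sum_{a \in \Sigma} \alpha(a) \log \frac{1}{\beta(a)} = \sum_{a \in \Sigma} \alpha(a) \log \frac{\alpha(a)}{\beta(a)} + \sum_{a \in \Sigma} \alpha(a) \log \frac{1}{\alpha(a)} = \mathcal{D}(\alpha||\beta) + \mathcal{H}(\alpha)
\]
gives the claimed asymptotic. There is essentially no obstacle: the only subtlety is that positivity of $\beta$ is what keeps $\log(1/\beta(a))$ bounded, so the $o(|w|)$ contributions from the finitely many symbols can be absorbed into a single $o(|w|)$ error term.
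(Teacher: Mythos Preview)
Your argument is correct: grouping by symbol, using the limiting frequencies from $S \in FREQ^{\alpha}$, and recognizing the cross-entropy $\sum_a \alpha(a)\log(1/\beta(a))$ as $\mathcal{H}(\alpha) + \mathcal{D}(\alpha\|\beta)$ is exactly the right computation, and your remark about positivity of $\beta$ keeping the coefficients finite is the one point that needs saying. The paper does not actually prove this lemma; it is quoted from \cite{Lutz11}, so there is no ``paper's own proof'' to compare against, but what you wrote is the standard proof and would be accepted without change.
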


The rest of this paper is primarily concerned with probability measures on alphabets. Our first result of this section is a mutual divergence formula for random, mutually $\beta$--normalizable sequences. This can be thought of as a ``mutual'' version of a divergence formula in \cite{Lutz11}.

\begin{theorem}[Mutual Divergence Formula]\label{mdf}
If $\alpha$ and $\beta$ are computable, positive probability measures on $\Sigma \times \Sigma$, then, for every $(R_1,R_2) \in \Sigma^{\infty} \times \Sigma^{\infty}$ that is random with respect to $\alpha$ such that $R_1$ and $R_2$ are mutually $\beta$--normalizable,
\[
mdim^{\beta}(R_1:R_2) {=} Mdim^{\beta}(R_1:R_2) {=} \frac{I(\alpha_1:\alpha_2)}{\mathcal{H}(\alpha_1) + \mathcal{D}(\alpha_1 || \beta_1)} {=} \frac{I(\alpha_1:\alpha_2)}{\mathcal{H}(\alpha_2) + \mathcal{D}(\alpha_2 || \beta_2)}.
\]
\end{theorem}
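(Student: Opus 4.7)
The plan is to compute the ratio $I(u:w)/\ell_{\beta_1}(u)$ as $(u,w) \to (R_1, R_2)$ by replacing the numerator and denominator with their asymptotic estimates, using tools already established in the paper. The numerator is handled by Corollary \ref{i to i}: since $(R_1, R_2)$ is random with respect to the computable positive measure $\alpha$, one has
\[
I(u:w) = |u|\, I(\alpha_1 : \alpha_2) + o(|u|)
\]
for all $(u,w) \sqsubseteq (R_1, R_2)$.

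For the denominator I would invoke the Frequency Divergence Lemma applied to $R_1$ and the marginal $\alpha_1$. To do so I need $R_1 \in FREQ^{\alpha_1}$, which follows from two facts: Observation \ref{rand marg} tells us that $R_1$ is random with respect to $\alpha_1$, and for a computable positive i.i.d.\ measure, Martin-L\"of randomness implies the strong law of large numbers, so the empirical frequencies of $R_1$ converge to $\alpha_1$. The Frequency Divergence Lemma then gives
\[
\ell_{\beta_1}(u) = |u|\bigl(\mathcal{H}(\alpha_1) + \mathcal{D}(\alpha_1 || \beta_1)\bigr) + o(|u|)
\]
as $u \to R_1$.

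Dividing these two asymptotics, the $|u|$ factors cancel and both error terms vanish in the limit, yielding
\[
\lim_{(u,w) \to (R_1, R_2)} \frac{I(u:w)}{\ell_{\beta_1}(u)} = \frac{I(\alpha_1 : \alpha_2)}{\mathcal{H}(\alpha_1) + \mathcal{D}(\alpha_1 || \beta_1)}.
\]
Because this is a genuine limit, rather than merely a liminf or limsup, both $mdim^{\beta}(R_1:R_2)$ and $Mdim^{\beta}(R_1:R_2)$ coincide with this value. The second equality in the statement, with subscripts $2$ in place of $1$, is immediate from the parallel calculation using $\ell_{\beta_2}(w)$ in the denominator, which by definition of mutual $\beta$-normalizability computes the same Billingsley mutual dimensions; equivalently, mutual $\beta$-normalizability combined with the Frequency Divergence Lemma applied to both $R_1$ and $R_2$ forces $\mathcal{H}(\alpha_1) + \mathcal{D}(\alpha_1 || \beta_1) = \mathcal{H}(\alpha_2) + \mathcal{D}(\alpha_2 || \beta_2)$.

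The proof is essentially a clean assembly of previously proved results. The one step that requires a brief comment outside the established toolkit is verifying that Martin-L\"of randomness of $R_1$ with respect to a computable positive i.i.d.\ measure places $R_1$ in $FREQ^{\alpha_1}$; this is a standard martingale or Borel-Cantelli argument, and is the only (minor) subtlety I anticipate.
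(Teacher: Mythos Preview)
Your proposal is correct and follows essentially the same route as the paper: invoke Corollary~\ref{i to i} for the numerator, the Frequency Divergence Lemma for the denominator, divide, and observe that the genuine limit exists. If anything, you are more careful than the paper in making explicit the step $R_1 \in FREQ^{\alpha_1}$ (via Observation~\ref{rand marg} and the law of large numbers for Martin-L\"of random sequences), which the paper's proof uses tacitly when it applies the Frequency Divergence Lemma.
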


\begin{proof}
By Corollary \ref{i to i} and the Frequency Divergence Lemma, we have
\begin{align*}
mdim^{\beta}(R_1:R_2) &= \displaystyle\liminf_{(u,w) \rightarrow (R_1,R_2)}\frac{I(u:w)}{\ell_{\beta_1}(u)}\\
									&= \displaystyle\liminf_{(u,w) \rightarrow (R_1,R_2)}\frac{|u|I(\alpha_1:\alpha_2) + o(|u|\log |\Sigma|)}{(\mathcal{H}(\alpha_1) + \mathcal{D}(\alpha_1 || \beta_1))|u| + o(|u|)}\\
									&=\displaystyle\liminf_{(u,w) \rightarrow (R_1,R_2)}\frac{|u|(I(\alpha_1:\alpha_2) + o(\log |\Sigma|))}{|u|((\mathcal{H}(\alpha_1) + \mathcal{D}(\alpha_1 || \beta_1)) + o(1))}\\
									&= \frac{I(\alpha_1:\alpha_2)}{\mathcal{H}(\alpha_1) + \mathcal{D}(\alpha_1 || \beta_1)}.
\end{align*}
Similar arguments show that
\[
mdim^{\beta}(R_1:R_2) = \frac{I(\alpha_1:\alpha_2)}{\mathcal{H}(\alpha_2) + \mathcal{D}(\alpha_2 || \beta_2)}
\]
and
\[
Mdim^{\beta}(R_1:R_2) = \frac{I(\alpha_1:\alpha_2)}{\mathcal{H}(\alpha_1) + \mathcal{D}(\alpha_1 || \beta_1)} = \frac{I(\alpha_1:\alpha_2)}{\mathcal{H}(\alpha_2) + \mathcal{D}(\alpha_2 || \beta_2)}. \qedhere
\]
\end{proof}

We conclude this section by making some initial observations regarding when mutual normalizability can be achieved.

\begin{definition} Let $\alpha_1$, $\alpha_2$, $\beta_1$, $\beta_2$ be probability measures over $\Sigma$. We say that $\alpha_1$ is $(\beta_1,\beta_2)$--\emph{equivalent} to $\alpha_2$ if
\[
\displaystyle\sum\limits_{a \in \Sigma}\alpha_1(a)\log \frac{1}{\beta_1(a)} = \displaystyle\sum\limits_{a \in \Sigma}\alpha_2(a)\log \frac{1}{\beta_2(a)}.
\]
\end{definition}

For a probability measure $\alpha$ on $\Sigma$, let $FREQ_{\alpha}$ be the set of sequences $S \in \Sigma^{\infty}$ satisfying $\lim_{n \rightarrow \infty}n^{-1}|\{i < n \,\big| \, S[i] = a\}| = \alpha(a)$ for all $a \in \Sigma$.

\begin{lemma}\label{equiv norm}
Let $\alpha_1$, $\alpha_2$, $\beta_1$, $\beta_2$ be probability measures on $\Sigma$. If $\alpha_1$ is $(\beta_1,\beta_2)$--equivalent to $\alpha_2$, then, for all pairs $(S,T) \in FREQ_{\alpha_1} \times FREQ_{\alpha_2}$, $S$ and $T$ are mutually $\beta$--normalizable.
\end{lemma}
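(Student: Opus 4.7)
The plan is to reduce the claim to the Frequency Divergence Lemma cited just above. Applied to $S \in FREQ_{\alpha_1}$ with the probability measure $\beta_1$, that lemma gives
\[
\ell_{\beta_1}(u) \;=\; \bigl(\mathcal{H}(\alpha_1) + \mathcal{D}(\alpha_1 \| \beta_1)\bigr)|u| + o(|u|)
\]
as $u \to S$, and symmetrically
\[
\ell_{\beta_2}(w) \;=\; \bigl(\mathcal{H}(\alpha_2) + \mathcal{D}(\alpha_2 \| \beta_2)\bigr)|w| + o(|w|)
\]
as $w \to T$. First I would unfold the standard identity
\[
\mathcal{H}(\alpha_i) + \mathcal{D}(\alpha_i \| \beta_i) \;=\; \sum_{a \in \Sigma}\alpha_i(a)\log\frac{1}{\beta_i(a)},
\]
so that the hypothesis that $\alpha_1$ is $(\beta_1,\beta_2)$--equivalent to $\alpha_2$ becomes exactly the statement that the two leading coefficients above are equal; denote their common value by $c$.

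Since the limit $(u,w)\to(S,T)$ is taken over pairs with $|u|=|w|=:n$, I would then substitute and form the ratio
\[
\frac{\ell_{\beta_1}(u)}{\ell_{\beta_2}(w)} \;=\; \frac{cn + o(n)}{cn + o(n)} \;\longrightarrow\; 1
\]
as $n\to\infty$, which is precisely the defining condition for $S$ and $T$ to be mutually $\beta$--normalizable.

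The only genuine obstacle is the degenerate case $c=0$, where the ratio above is $0/0$. One has $c=0$ iff each $\alpha_i$ is supported on the set $\{a \in \Sigma : \beta_i(a)=1\}$; under the standing assumption in this section that $\beta$ is strongly positive and $|\Sigma|\geq 2$, no such atom exists (every $\beta_i(a) < 1$), so $c > 0$ and the argument goes through. The trivial case $|\Sigma|=1$ gives $\ell_{\beta_1}(u)=\ell_{\beta_2}(w)=0$ and the conclusion is vacuous.
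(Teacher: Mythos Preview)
Your proposal is correct and follows essentially the same route as the paper: both apply the Frequency Divergence Lemma to numerator and denominator, rewrite $\mathcal{H}(\alpha_i)+\mathcal{D}(\alpha_i\|\beta_i)=\sum_{a}\alpha_i(a)\log\frac{1}{\beta_i(a)}$, invoke the $(\beta_1,\beta_2)$--equivalence hypothesis to equate the two leading coefficients, and take the limit of the ratio. Your only addition is the explicit treatment of the degenerate case $c=0$, which the paper silently assumes away (the Frequency Divergence Lemma already requires positive $\beta_i$, so with $|\Sigma|\ge 2$ one has $c>0$ as you note).
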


\begin{proof}
By the Frequency Divergence Lemma,
\begin{align*}
\displaystyle\lim\limits_{(u,w) \rightarrow (S,T)}\frac{\ell_{\beta_1}(u)}{\ell_{\beta_2}(w)} &= \displaystyle\lim\limits_{n \rightarrow \infty}\frac{(\mathcal{H}(\alpha_1) + \mathcal{D}(\alpha_1 || \beta_1)) \cdot n + o(n)}{(\mathcal{H}(\alpha_2) + \mathcal{D}(\alpha_2 || \beta_2)) \cdot n + o(n)}\\
&= \frac{\mathcal{H}(\alpha_1) + \mathcal{D}(\alpha_1 || \beta_1)}{\mathcal{H}(\alpha_2) + \mathcal{D}(\alpha_2 || \beta_2)}\\
&= \frac{\displaystyle\sum\limits_{a \in \Sigma}\alpha_1(a)\log \frac{1}{\beta_1(a)}}{\displaystyle\sum\limits_{a \in \Sigma}\alpha_2(a)\log \frac{1}{\beta_2(a)}}\\
&= 1,
\end{align*}
where the last equality is due to $\alpha_1$ being $(\beta_1,\beta_2)$--equivalent to $\alpha_2$. \qedhere
\end{proof}

Given probability measures $\beta_1$ and $\beta_2$ on $\Sigma$, we would like to know which sequences are mutually $\beta$--normalizable. The following results help to answer this question for probability measures on and sequences over $\{0,1\}$.

\begin{lemma}\label{prob meas}Let $\beta_1$ and $\beta_2$ be probability measures on $\{0,1\}$ such that exactly one of the following conditions hold.
\begin{enumerate}
\item $0 < \beta_2(0) < \beta_1(1) < \beta_1(0) < \beta_2(1) < 1$
\item $0 < \beta_2(1) < \beta_1(0) < \beta_1(1) < \beta_2(0) < 1$
\item $0 < \beta_2(0) < \beta_1(0) < \beta_1(1) < \beta_2(1) < 1$
\item $0 < \beta_2(1) < \beta_1(1) < \beta_1(0) < \beta_2(0) < 1$
\item $\beta_1 = \mu$ and $\beta_2 \neq \mu$.
\end{enumerate}

If $f$ is defined by
\[
f(x) = \frac{x \cdot \log \frac{\beta_1(1)}{\beta_1(0)} + \log \frac{\beta_2(1)}{\beta_1(1)}}{\log \frac{\beta_2(1)}{\beta_2(0)}},
\]
then
\[
0 < f(x) < 1,
\]
for all $x \in [0,1]$.
\end{lemma}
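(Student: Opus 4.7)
The plan is to exploit the fact that $f$ is an affine function of $x$: writing $f(x) = \alpha x + \gamma$ with $\alpha = \log(\beta_1(1)/\beta_1(0))/\log(\beta_2(1)/\beta_2(0))$, the image $f([0,1])$ is the closed interval with endpoints $f(0)$ and $f(1)$. Hence checking that $0 < f(x) < 1$ for every $x \in [0,1]$ reduces to checking this for just the two endpoint values. To keep the notation light I would abbreviate $a_i = \beta_i(0)$, $b_i = \beta_i(1)$ (so $a_i + b_i = 1$) and compute
\[
f(0) = \frac{\log(b_2/b_1)}{\log(b_2/a_2)}, \qquad f(1) = \frac{\log(b_2/a_1)}{\log(b_2/a_2)}.
\]
Each value is a ratio of logarithms of positive ratios, so whether it lies in $(0,1)$ is controlled by two things: the numerator and denominator must have the same (nonzero) sign, and the magnitude of the numerator must be strictly smaller than that of the denominator.

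Next I would dispatch the cases. In Case 1, the ordering $a_2 < b_1 < a_1 < b_2$ gives $\log(b_2/a_2) > 0$; positivity of $f(0)$ follows from $b_2 > b_1$ and the inequality $f(0) < 1$ is exactly $b_2/b_1 < b_2/a_2$, i.e.\ $a_2 < b_1$, which is in the hypothesis. Similarly $f(1) > 0$ because $b_2 > a_1$, and $f(1) < 1$ is just $a_2 < a_1$. Cases 2 and 4 are the reflections obtained by swapping the roles of $0$ and $1$: the denominator is now negative, both numerators are negative, and the magnitude comparisons become $a_2 > b_1$ and $a_2 > a_1$, both again in the hypothesis. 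Case 3 parallels Case 1 with the innermost strict inequality flipped, and Case 4 parallels Case 2 likewise; in each instance the sign pattern is forced and the required strict inequalities reduce, after clearing logarithms, to two of the ordering hypotheses on $a_1, b_1, a_2, b_2$.

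Case 5 is the degenerate one: when $\beta_1 = \mu$ we have $a_1 = b_1 = 1/2$, so $\log(b_1/a_1) = 0$ and $f$ is a constant function equal to $\log(b_2/b_1)/\log(b_2/a_2)$. The hypothesis $\beta_2 \ne \mu$ forces $a_2 \ne b_2$, so the denominator is nonzero; whichever of $a_2, b_2$ is the larger determines a common sign for numerator and denominator, and the inequality $f < 1$ reduces to $a_2 < 1/2 < b_2$ (or its reverse), both of which are immediate from $\beta_2 \ne \mu$ after one notes that $b_1 = 1/2$ lies strictly between $a_2$ and $b_2$. The main obstacle is therefore not conceptual but bookkeeping-heavy: one has to be careful to track signs when the denominator $\log(b_2/a_2)$ is negative (so that the direction of the inequality reverses when clearing logarithms), and the five cases must each be verified separately because a uniform ordering statement like ``$\beta_2$ is more extreme than $\beta_1$'' is not quite enough to cover Cases 3 and 4.
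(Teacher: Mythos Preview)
Your proposal is correct and follows essentially the same strategy as the paper: both exploit that $f$ is affine to reduce the problem to checking the endpoint values $f(0)$ and $f(1)$, and then verify these lie in $(0,1)$ case by case via sign and magnitude comparisons of the relevant logarithms. The only cosmetic difference is that the paper sometimes checks three of the four endpoint inequalities and then invokes the sign of the slope to obtain the fourth, whereas you verify all four directly; either way the needed inequalities collapse to the ordering hypotheses.
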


\begin{proof}
First, observe that $f$ is linear and has a negative slope under conditions 1 and 2, a positive slope under conditions 3 and 4, and zero slope under condition 5. We verify that, for all $x \in [0,1]$, $f(x) \in (0,1)$ under each condition.

Under condition 1, we assume
\[
\beta_2(0) < \beta_1(1) < \beta_2(1),
\]
which implies that
\[
\log \frac{\beta_2(0)}{\beta_2(1)} < \log \frac{\beta_1(1)}{\beta_2(1)} < 0.
\]
From the above inequality, we obtain
\[
0 < \frac{\log \frac{\beta_2(1)}{\beta_1(1)}}{\log \frac{\beta_2(1)}{\beta_2(0)}} < 1.
\]
Therefore, by the definition of $f$,
\begin{align}\label{between}
0 < f(0) < 1.
\end{align}


Under the same condition, we have
\[
\beta_1(0) < \beta_2(1),
\]
which implies that
\[
\log \frac{\beta_1(0)}{\beta_1(1)} < \log \frac{\beta_2(1)}{\beta_1(1)}.
\]
From the above inequality, we obtain
\[
\frac{\log \frac{\beta_1(0)}{\beta_1(1)}}{\log \frac{\beta_2(1)}{\beta_2(0)}} < \frac{\log \frac{\beta_2(1)}{\beta_1(1)}}{\log \frac{\beta_2(1)}{\beta_2(0)}},
\]
whence
\[
0 < \frac{\log \frac{\beta_1(1)}{\beta_1(0)} + \log \frac{\beta_2(1)}{\beta_1(1)}}{\log \frac{\beta_2(1)}{\beta_2(0)}}.
\]
Therefore, by the definition of $f$,
\begin{align}\label{above}
0 < f(1).
\end{align}
By (\ref{between}), (\ref{above}), and the negativity of the slope of $f$,
\[
0 < f(1) < f(0) < 1.
\]

A similar argument shows that, if condition 2 holds, then $0 < f(1) < f(0) < 1$.

Assuming condition 3, we can prove that, if $\beta_2(0) < \beta_1(1) < \beta_2(1)$, then
\begin{align}\label{between1}
0 < f(0) < 1,
\end{align}
using the argument given above. Under the same condition, we have
\[
\beta_2(0) < \beta_1(0),
\]
which implies that
\[
\log \beta_1(1) - \log \beta_1(0) + \log \beta_2(1) - \log \beta_1(1) < \log \beta_2(1) - \log \beta_2(0).
\]
From this inequality, we derive
\[
\frac{\log \frac{\beta_1(1)}{\beta_1(0)} + \log \frac{\beta_2(1)}{\beta_1(1)}}{\log \frac{\beta_2(1)}{\beta_2(0)}} < 1.
\]
Therefore, by the definition of $f$,
\begin{align}\label{above1}
f(1) < 1.
\end{align}
By (\ref{between1}), (\ref{above1}), and the positivity of the slope of $f$,
\[
0 < f(0) < f(1) < 1.
\]

A similar argument shows that, if condition 4 holds, then $0 < f(1) < f(0) < 1$.

Under condition 5 and without loss of generality, assume that $\beta_1 = \mu$ and $\beta_2(0) < 1/2 < \beta_2(1)$, which implies
\[
0 < 1 + \log \beta_2(1) < \log \frac{\beta_2(1)}{\beta_2(0)}.
\]
From the above inequality, we derive
\[
0 < \frac{\log \frac{\beta_2(1)}{1/2}}{\log \frac{\beta_2(1)}{\beta_2(0)}} < 1,
\]
whence, by the definition of $f$,
\[
0 < f(x) < 1,
\]
for all $x \in [0,1]$. \qedhere
\end{proof}

\begin{theorem}\label{only one}
Let $\beta_1$ and $\beta_2$ be probability measures on $\{0,1\}$ that satisfy exactly one of the conditions from Lemma \ref{prob meas}, and let $\alpha_1$ be an arbitrary probability measure on $\{0,1\}$. Then $\alpha_1$ is $(\beta_1,\beta_2)$--equivalent to exactly one unique probability measure $\alpha_2$, which is defined by
\[
\alpha_2(0) = \frac{\alpha_1(0)\log \frac{\beta_1(1)}{\beta_1(0)} + \log \frac{\beta_2(1)}{\beta_1(1)}}{\log \frac{\beta_2(1)}{\beta_2(0)}} \,\,\, \text{ and } \,\,\, \alpha_2(1) = 1 - \alpha_2(0).
\]
\end{theorem}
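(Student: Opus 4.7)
The plan is a direct algebraic derivation. First I would expand the definition of $(\beta_1,\beta_2)$-equivalence, which over $\{0,1\}$ reads
\[
\alpha_1(0)\log\frac{1}{\beta_1(0)} + \alpha_1(1)\log\frac{1}{\beta_1(1)} = \alpha_2(0)\log\frac{1}{\beta_2(0)} + \alpha_2(1)\log\frac{1}{\beta_2(1)},
\]
and then substitute $\alpha_i(1) = 1 - \alpha_i(0)$ on both sides. After collecting terms, this equation becomes linear in $\alpha_2(0)$:
\[
\alpha_2(0)\log\frac{\beta_2(1)}{\beta_2(0)} = \alpha_1(0)\log\frac{\beta_1(1)}{\beta_1(0)} + \log\frac{\beta_2(1)}{\beta_1(1)}.
\]
Each of the five conditions on $(\beta_1,\beta_2)$ listed in Lemma~\ref{prob meas} forces $\beta_2(0) \neq \beta_2(1)$, so the coefficient $\log\frac{\beta_2(1)}{\beta_2(0)}$ is nonzero and the equation may be solved uniquely for $\alpha_2(0)$. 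The resulting value is precisely the formula stated in the theorem, and $\alpha_2(1)$ is then determined by $\alpha_2(1) = 1 - \alpha_2(0)$.

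Next I would verify that this $\alpha_2$ really is a probability measure on $\{0,1\}$, i.e., that $\alpha_2(0) \in (0,1)$. This is exactly what Lemma~\ref{prob meas} delivers: the expression obtained for $\alpha_2(0)$ is the function $f(x)$ from that lemma evaluated at $x = \alpha_1(0) \in [0,1]$, so $0 < \alpha_2(0) < 1$ holds under every one of the five conditions, and hence $\alpha_2(1) = 1-\alpha_2(0) \in (0,1)$ as well.

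Finally, uniqueness is immediate: the equivalence equation is linear in $\alpha_2(0)$ with a nonzero leading coefficient, so any $(\beta_1,\beta_2)$-equivalent probability measure $\alpha_2$ must satisfy this same equation and therefore must coincide with the one produced by the formula. There is no real obstacle — the theorem is essentially a packaging of Lemma~\ref{prob meas} together with a one-line algebraic manipulation of the $(\beta_1,\beta_2)$-equivalence identity, and the only bookkeeping to check is that the five hypotheses rule out the degenerate case $\beta_2(0) = \beta_2(1)$ that would make the linear equation either inconsistent or identically satisfied.
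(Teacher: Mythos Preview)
Your proposal is correct and follows essentially the same approach as the paper: both arguments reduce the $(\beta_1,\beta_2)$-equivalence equation to a linear equation in $\alpha_2(0)$ via the substitution $\alpha_i(1)=1-\alpha_i(0)$, then invoke Lemma~\ref{prob meas} to certify that the resulting $\alpha_2$ is a genuine probability measure. Your write-up is in fact slightly more explicit than the paper's, since you spell out why the coefficient $\log\frac{\beta_2(1)}{\beta_2(0)}$ is nonzero under each of the five hypotheses and hence why uniqueness follows.
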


\begin{proof}
By Lemma \ref{prob meas}, $\alpha_2$ is a valid probability measure. Observe that
\[
\alpha_2(0) = \frac{\alpha_1(0)\log \frac{\beta_1(1)}{\beta_1(0)} + \log \frac{\beta_2(1)}{\beta_1(1)}}{\log \frac{\beta_2(1)}{\beta_2(0)}}
\]
if and only if
\[
\alpha_1(0) \bigg( \log \frac{1}{\beta_1(0)} - \log \frac{1}{\beta_1(1)} \bigg) + \log \frac{1}{\beta_1(1)} = \alpha_2(0) \bigg( \log \frac{1}{\beta_2(0)} - \log \frac{1}{\beta_2(1)} \bigg) + \log \frac{1}{\beta_2(1)}.
\]
The above equality holds if and only if
\[
\alpha_1(0)\log \frac{1}{\beta_1(0)} + \alpha_1(1)\log \frac{1}{\beta_1(1)} = \alpha_2(0)\log \frac{1}{\beta_2(0)} + \alpha_2(1)\log \frac{1}{\beta_2(1)},
\]
which implies that $\alpha_1$ is $(\beta_1,\beta_2)$--equivalent to $\alpha_2$. \qedhere


\end{proof}

The following corollary follows from Theorem \ref{only one} and Lemma \ref{equiv norm}.

\begin{corollary}
Let $\beta_1$, $\beta_2$, $\alpha_1$, and $\alpha_2$ be as defined in Theorem \ref{only one}. For all $(S,T) \in FREQ_{\alpha_1} \times FREQ_{\alpha_2}$, $S$ and $T$ are mutually $\beta$--normalizable.
\end{corollary}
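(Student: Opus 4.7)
The plan is to prove the corollary by simply chaining the two immediately preceding results. Theorem \ref{only one} supplies the $(\beta_1,\beta_2)$--equivalence of $\alpha_1$ and $\alpha_2$, and Lemma \ref{equiv norm} converts any such equivalence into mutual $\beta$--normalizability for every pair of sequences whose empirical symbol frequencies are governed by $\alpha_1$ and $\alpha_2$. There is essentially no new technical content; the work is entirely in selecting the right quantifier to invoke each earlier result.

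Concretely, I would first fix probability measures $\beta_1,\beta_2$ on $\{0,1\}$ satisfying one of the five conditions of Lemma \ref{prob meas}, take $\alpha_1$ to be an arbitrary probability measure on $\{0,1\}$, and let $\alpha_2$ be the (unique) probability measure assigned to $\alpha_1$ by the formula in Theorem \ref{only one}. By that theorem $\alpha_1$ is $(\beta_1,\beta_2)$--equivalent to $\alpha_2$, that is,
\[
\sum_{a \in \{0,1\}} \alpha_1(a)\log\frac{1}{\beta_1(a)} = \sum_{a \in \{0,1\}} \alpha_2(a)\log\frac{1}{\beta_2(a)}.
\]
Now let $(S,T) \in FREQ_{\alpha_1} \times FREQ_{\alpha_2}$ be arbitrary. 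Lemma \ref{equiv norm} applies verbatim to this $(S,T)$ with the equivalence above as hypothesis, and its conclusion is precisely that $S$ and $T$ are mutually $\beta$--normalizable.

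The main obstacle is essentially nonexistent. The only thing worth remarking on, if one wanted to be pedantic, is that the $\alpha_2$ produced by Theorem \ref{only one} really is a probability measure (so that $FREQ_{\alpha_2}$ is not automatically empty and the statement is nonvacuous); this is exactly the content of Lemma \ref{prob meas}, which is already invoked inside the proof of Theorem \ref{only one}. The corollary is therefore a one-line composition, and I would simply write: ``This is immediate from Theorem \ref{only one} followed by Lemma \ref{equiv norm}.''
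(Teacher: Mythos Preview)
Your proposal is correct and matches the paper's approach exactly: the paper does not even write out a proof, stating only that the corollary follows from Theorem \ref{only one} and Lemma \ref{equiv norm}, which is precisely the composition you describe.
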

\noindent {\bf Acknowledgments.}
We thank an anonymous reviewer of \cite{CasLut15} for posing the question answered by Corollary \ref{zero not ind}. We also thank anonymous reviewers of this paper for useful comments, especially including Observation \ref{rand marg}.
\bibliography{Master}

\begin{thebibliography}{10}

\bibitem{Bill60}
P.~Billingsley.
\newblock Hausdorff dimension in probability theory.
\newblock {\em Illinois Journal of Mathematics}, 4:187--209, 1960.

\bibitem{Bill65}
P.~Billingsley.
\newblock {\em {Ergodic theory and information}}.
\newblock R. E. Krieger Pub. Co, 1978.

\bibitem{Caja82}
H.~Cajar.
\newblock {\em Billingsley dimension in probability spaces}, volume 892 of {\em
  Lecture Notes in Mathematics}.
\newblock Springer, 1981.

\bibitem{CasLut15}
Adam Case and Jack~H. Lutz.
\newblock Mutual dimension.
\newblock {\em ACM Transactions on Computation Theory}, 7, July 2015, article
  no. 12.

\bibitem{bCovTho06}
Thomas~R. Cover and Joy~A. Thomas.
\newblock {\em {Elements of Information Theory}}.
\newblock John Wiley \& Sons, Inc., second edition, 2006.

\bibitem{DowHir10}
Rodney~G. Downey and Denis~R. Hirschfeldt.
\newblock {\em {Algorithmic Randomness and Complexity}}.
\newblock Springer, 2010 edition, 2010.

\bibitem{Eggl49}
H.~G. Eggleston.
\newblock The fractional dimension of a set defined by decimal properties.
\newblock {\em The Quarterly Journal of Mathematics}, 20:31--36, 1965.

\bibitem{GLMM14}
Xiaoyang Gu, Jack~H. Lutz, R.~Elvira Mayordomo, and Philippe Moser.
\newblock Dimension spectra of random subfractals of self-similar fractals.
\newblock {\em Annals of Pure and Applied Logic}, 165:1707--1726, 2014.

\bibitem{bHanKob07}
Te~Sun Han and Kingo Kobayashi.
\newblock {\em {Mathematics of Information and Coding}}.
\newblock Translations of Mathematical Monographs (Book 203). American
  Mathematical Society, 2007.

\bibitem{Kaku48}
Shizuo Kakutani.
\newblock On equivalence of infinite product measures.
\newblock {\em Annals of Mathematics}, 49(1):214--224, 1948.

\bibitem{Kolm65}
A.~N. Kolmogorov.
\newblock Three approaches to the quantitative definition of information.
\newblock {\em Problems of Information Transmission}, 1(1):1--7, 1965.

\bibitem{LePeWi09}
David~A. Levin, Yuval Peres, and Elizabeth~L. Wilmer.
\newblock {\em {Markov Chains and Mixing Times}}.
\newblock American Mathematical Society, first edition, 2009.

\bibitem{bLiVit08}
Ming Li and Paul Vit{\'a}nyi.
\newblock {\em {An Introduction to Kolmogorov Complexity and Its
  Applications}}.
\newblock Springer, third edition, 2008.

\bibitem{Lutz03b}
Jack~H. Lutz.
\newblock Dimension in complexity classes.
\newblock {\em SIAM Journal on Computing}, 32(5):1235--1259, 2003.

\bibitem{jLutz03a}
Jack~H. Lutz.
\newblock {The dimensions of individual strings and sequences}.
\newblock {\em Information and Computation}, 187(1):49--79, 2003.

\bibitem{Lutz11}
Jack~H. Lutz.
\newblock A divergence formula for randomness and dimension.
\newblock {\em Theoretical Computer Science}, 412:166--177, 2011.

\bibitem{jLutMay08}
Jack~H. Lutz and Elvira Mayordomo.
\newblock {Dimensions of points in self-similar fractals}.
\newblock {\em SIAM Journal on Computing}, 38(3):1080--1112, 2008.

\bibitem{Mart66}
Per Martin-L{\"o}f.
\newblock The definition of random sequences.
\newblock {\em Information and Control}, 9:602--619, 1966.

\bibitem{Nies09}
Andre Nies.
\newblock {\em {Computability and Randomness}}.
\newblock Oxford University Press, reprint edition, 2012.

\bibitem{ODon14}
Ryan O'Donnell.
\newblock {\em {Analysis of Boolean Functions}}.
\newblock Cambridge University Press, first edition, 2014.

\bibitem{Schn71b}
Claus{-}Peter Schnorr.
\newblock A unified approach to the definition of random sequences.
\newblock {\em Mathematical Systems Theory}, 5(3):246--258, 1971.

\bibitem{Schn71a}
Claus{-}Peter Schnorr.
\newblock {\em {Zufälligkeit und Wahrscheinlichkeit: Eine algorithmische
  Begründung der Wahrscheinlichkeitstheorie}}.
\newblock Springer-Verlag, 1971 edition, 1971.

\bibitem{Schn77}
Claus{-}Peter Schnorr.
\newblock {A survey of the theory of random sequences}.
\newblock In {\em Proceedings of the Fifth International Congress of Logic,
  Methodology and Philosophy of Science}, pages 193--211. Springer, 1977.

\bibitem{jSheVer02}
Alexander Shen and Nikolai~K. Vereshchagin.
\newblock {Logical operations and Kolmogorov complexity}.
\newblock {\em Theoretical Computer Science}, 271(1-2):125--129, 2002.

\bibitem{tLamb87}
M.~van Lambalgen.
\newblock {\em {Random Sequences}}.
\newblock PhD thesis, University of Amsterdam, 1987.

\bibitem{Lamb87}
M.~van Lambalgen.
\newblock Von mises' definition of random sequences reconsidered.
\newblock {\em Journal of Symbolic Logic}, 52(3):725--755, 1987.

\bibitem{Vovk87}
V.G. Vovk.
\newblock On a criterion for randomness.
\newblock {\em Dokl. Akad. Nauk SSSR}, 294(6):1298--1302, 1987.

\end{thebibliography}
\newpage
\end{document}